\documentclass[pdflatex,sn-mathphys-num]{sn-jnl}

\usepackage{graphicx}%
\usepackage{multirow}%
\usepackage{amsmath,amssymb,amsfonts}%
\usepackage{amsthm}%
\usepackage{mathrsfs}%
\usepackage[title]{appendix}%
\usepackage{xcolor}%
\usepackage{textcomp}%
\usepackage{manyfoot}%
\usepackage{booktabs}%
\usepackage{algorithm}%
\usepackage{algorithmicx}%
\usepackage{algpseudocode}%
\usepackage{listings}%

\usepackage{mathtools}
\usepackage{changepage}

\theoremstyle{thmstyleone}%
\newtheorem{theorem}{Theorem}
\newtheorem{proposition}[theorem]{Proposition}%

\newtheorem{lemma}{Lemma}%

\theoremstyle{thmstyletwo}%

\theoremstyle{thmstylethree}%

\begin{document}

\let\vec\boldsymbol
\def\E{\mathbb{E}}
\def\inc{I}
\def\mstar{M_{\bullet}}
\def\real{\operatorname{Re}}
\def\imag{\operatorname{Im}}
\def\L{\mathrm{L}}

\def\vece{\vec{e}}
\def\vech{\vec{h}}
\def\vecrho{\vec{\rho}}
\def\vecz{\vec{z}}
\def\veczeta{\vec{\zeta}}

\def\vecptilde{\vec{\tilde{p}}}
\def\vecrhotilde{\vec{\tilde{\rho}}}

\newcommand{\vechat}[1]{\vec{\hat{#1}}}
\newcommand{\rot}[1]{R_{\vec{\hat{#1}}}}

\newcommand{\A}[1]{\Vert\vec\nu\Vert^{-#1}}
\newcommand{\lapl}[2]{b_{#1/2}^{(#2)}}

\newcommand{\apj}{Astrophysical Journal}

\title[Article Title]{Expansion of the planetary Hamiltonian for small eccentricities and inclinations: a vector formalism}


\author*[]{\fnm{Federico} \sur{Mogavero}}\email{mogavero@math.unipd.it}


\affil[]{\orgdiv{Dipartimento di Matematica “Tullio Levi-Civita”}, \orgname{Università degli Studi di Padova}, \orgaddress{\street{Via Trieste 63}, \postcode{35121} \city{Padova}, \country{Italy}}}



\abstract{We revisit the classical expansion of the planetary Hamiltonian for small eccentricities and mutual inclinations of the orbits, presenting a derivation based entirely on vector formalism. We demonstrate that the secular part of the disturbing function, as well as any other Fourier harmonic, can be expressed in terms of scalar products of vectors lying in the system's invariant plane. Furthermore, we show how to express any such term using the angular momentum and eccentricity vectors of the orbits.}

\keywords{Celestial mechanics, planetary dynamics, disturbing function, computer algebra}

\maketitle

\section{Introduction}
\label{sect:intro}

A large part of the classical and modern developments in planetary dynamics relies on the expansion of the disturbing function as a power series in certain orbital elements. The disturbing function contains the gravitational interactions between planets and governs the long-term evolution of the orbits. 
Two standard approaches are commonly employed to carry out its expansion. In hierarchical systems, the disturbing function is developed in powers of the semi-major axis ratios, while keeping exact dependence on eccentricities and inclinations \cite{Hansen1855,Hill1875,Tisserand1894}. In contrast, for compact configurations, it is expanded as a series in eccentricities and inclinations, with coefficients that are exact in semi-major axis ratios \cite{Laplace1773,Lagrange1781,Lagrange1782,LeVerrier1855}. 

The development in semi-major axis ratios can be readily carried out in a vector-based formalism, by writing the gravitational potential in terms of Legendre polynomials \cite[e.g.,][]{Tremaine2023}. The situation is different when it comes to the expansion in eccentricities and inclinations, for which one usually employs scalar quantities. A notable exception is the work of \citet{Abdullah2001a,Abdullah2001}, later taken up by \citet{Boue2014}. Abdullah starts from a standard expansion of the disturbing function involving scalar orbital elements and then restates its secular part in terms of the spin-like variables $\vec{h}+\vec{e}$ and $\vec{h}-\vec{e}$, where $\vec{h}$ and $\vec{e}$ are the dimensionless angular momentum and eccentricity vectors of a Keplerian orbit\footnote{See \cite{Rosengren2014} for a detailed historical account on the use of these vectors for perturbed Keplerian motions.}, respectively. The resulting secular Hamiltonian is remarkably compact. Moreover, as it only depends on scalar products involving the vectors $\vec{h}$ and $\vec{e}$, its expression is independent of any reference frame, which is one of the major advantages of a vector-based formalism. 

In this work, we revisit the expansion of the planetary Hamiltonian for small eccentricities and mutual inclinations of the orbits by carrying out a derivation that employs a vector formalism from the beginning. We first show how the expansion is naturally expressed in terms of vectors lying in the invariant plane of the planetary system (Secs.~\ref{sect:direct_part} and \ref{sec:indirect_part}). Then, we explain how to write both its secular part and all other Fourier harmonics -- traditionally referred to as ``inequalities'' in celestial mechanics -- by means of scalar products of such vectors (Secs.~\ref{sect:sec_ham} and \ref{sect:sec_inequalities}). We also express the inequalities in terms of the vectors $\vech$ and $\vece$ (Sec.~\ref{sec:eh}). We conclude by discussing the usefulness of a vector-based formalism. 

\section{Planetary Hamiltonian}
\label{sect:hamiltonian}

The Hamiltonian of a planetary system consisting of a star of mass $m_0$ and $n$ planets of masses $m_1, m_2, \dots, m_n \ll m_0$ can be written using canonical astrocentric variables as follows \citep[and references therein]{Laskar1991}: 
\begin{equation}
\label{eq:Hamiltonian}
H = \underbrace{\sum_{i=1}^n \left( \frac{\Vert \vec{p}_i \Vert^2}{2 \beta_i} 
- \frac{\mu_i \beta_i}{\Vert \vec{r}_i \Vert} \right)}_{H_0}
+ \underbrace{\sum_{1 \leq i < j \leq n} \left( \frac{\vec{p}_i \cdot \vec{p}_j}{m_0} - \frac{G m_i m_j}{\Vert \vec{r}_i - \vec{r}_j \Vert} \right)}_{H_1},
\end{equation}
where $\vec{r}_i$ are the astrocentric position vectors of the planets and $\vec{p}_i$ their barycentric momenta (so $\vec{p}_i \neq m_i \dot{\vec{r}}_i$); $G$ is the gravitational constant, $\beta_i= m_0 m_i/(m_0 + m_i)$ are the reduced planetary masses, and $\mu_i = G(m_0 + m_i)$.

When planets are away from any close encounter, the disturbing function $H_1$ is much smaller than Keplerian part $H_0$. In this case, the Hamiltonian~\eqref{eq:Hamiltonian} is in quasi-integrable form, and one can introduce canonical variables that trivially integrate the Kepler problem. Poincaré considered the following set of modified Delaunay variables in complex form \cite{Laskar1991}: 
\begin{equation}
\label{eq:mod_delaunay}
\begin{aligned}
&\Lambda = \beta\sqrt{\mu a}, \quad &x = \sqrt{\Lambda} \sqrt{1 - \sqrt{1- e^2}} \, \E^{\iota \varpi}, \\
&\lambda = M + \varpi, \quad &y = \sqrt{2 \Lambda} \left(1- e^2\right)^{\frac{1}{4}} \sin(\inc/2) \E^{\iota \Omega}, 
\end{aligned}
\end{equation}
where $a$ is the semi-major axis of a Kepler orbit, $e$ the eccentricity, $I$ the inclination, $\varpi$ the longitude of perihelion, $\Omega$ the longitude of the ascending node, $\lambda$ the mean longitude, and $M$ the mean anomaly. Throughout the article, $\iota$ represents the imaginary unit, $\E$ stands for the exponential operator, and the overbar denotes the complex conjugate. In Eq.~\eqref{eq:mod_delaunay}, $(\Lambda,\lambda), (x, -\iota \overline{x})$ and $(y, -\iota \overline{y})$ are momentum-coordinate pairs of canonical variables. With this choice of variables, the Keplerian part of the Hamiltonian reads as $H_0~=~-\sum_{i=1}^n \mu_i^2 \beta_i^3 / 2 \Lambda_i^2$, with mean motions $n_i = \dot{\lambda}_i = \sqrt{\mu_i/a_i^3}$, and the variables $\Lambda,x,y$ are integrals of motion for the Kepler problem. 

The complex variables $x, y$ in Eqs.~\eqref{eq:mod_delaunay} are related to the classical non-canonical elliptic elements $z=e\E^{\iota \varpi}$ and $\zeta=\sin(I/2)\E^{\iota \Omega}$ by the following identities:
\begin{equation}
\label{eq:zetaz_xy}
\begin{aligned}
&x = z \sqrt{\Lambda} \left( 1 + \sqrt{1- z \bar{z}}\right)^{-1/2}, 
&z = x \sqrt{\frac{2}{\Lambda}} \left( 1 - \frac{x \bar{x}}{2 \Lambda} \right)^{1/2},  \\
&y = \zeta \sqrt{2 \Lambda} \left( 1 - z \bar{z} \right)^{1/4}, 
&\zeta = y \frac{1}{\sqrt{2 \Lambda}} \left( 1 - \frac{x \bar{x}}{\Lambda} \right)^{-1/2} . 
\end{aligned}
\end{equation}
Hereafter, we will denote $\phi = (1- z \bar{z})^{1/2}$, $\chi = (1- \zeta \bar{\zeta})^{1/2}$, and $\psi = (1+\phi)^{-1}$.

As it will be shown in Sect.~\ref{sect:position_vector}, the position vector $\vec{r}$ can be expressed in a natural way as a function of the eccentric longitude $F = E + \varpi$, where $E$ stands for the eccentric anomaly. When written in terms of $F$, Kepler's equation reads as: 
\begin{equation}
\label{eq:kepler}
\lambda = F - \imag \left( \bar{z} \E^{\iota F} \right) ,
\end{equation}
while the heliocentric distance $r = \Vert \vec{r} \Vert$ is given by:
\begin{equation}
\label{eq:distance}
r = a \left[ 1 - \real \left( \bar{z} \E^{\iota F} \right) \right] .
\end{equation} 

\section{Direct part of the disturbing function}
\label{sect:direct_part}

We aim at expanding the direct part of the disturbing function $H_1$ related to the interaction of a pair of planets with heliocentric positions $\vec{r}$ and $\vec{r}'$, that is, 
\begin{equation}
\label{eq:direct_part}
\frac{a'}{\Delta} \vcentcolon= \frac{a'}{\Vert \vec{r} - \vec{r}' \Vert} = \frac{1}{\Vert \vecrho' - \alpha \vecrho \Vert},
\end{equation}
where $\vec{\rho} = \vec{r}/a$ is the dimensionless position vector and $\alpha = a/a'$ is the semi-major axis ratio. Hereafter, primed variables refer to the outer orbit, so $\alpha \leq 1$. 

\subsection{Position vector as a function of the orbital elements}
\label{sect:position_vector}

We first search for an expression of the position vector $\vec\rho$ in terms of orbital elements. Throughout our derivations, we will consider an inertial reference frame $\mathcal{I}$ with a right-handed basis of unit vectors $\vec{\hat{\imath}}, \vec{\hat{\jmath}}, \vec{\hat{k}}$\footnote{Throughout the paper, unit vectors are denoted by an overhat. The symbols $\times$ and $\cdot$ indicate cross and scalar products of vectors, respectively.}. The vertical axis $\vec{\hat{k}}$ is assumed to be aligned with the total angular momentum of the planetary system; it is therefore perpendicular to the system's invariant planet. With respect to such reference frame, the generic vector $\vec{v} \in \mathbb{R}^3$ will be written as:
\begin{equation}
\vec{v} = 
\begin{bmatrix} v_\imath \\ v_\jmath \\ v_k \end{bmatrix}_\mathcal{I} \vcentcolon= 
v_\imath \vec{\hat{\imath}} + v_\jmath \vec{\hat{\jmath}} + v_k \vec{\hat{k}} .
\end{equation}
We will also consider a (non-inertial) reference frame $\mathcal{O}$ attached to the instantaneous orbit of a given planet, with axes $\vec{\hat{e}}, \vec{\hat{h}} \times \vec{\hat{e}}, \vec{\hat{h}}$: the unit vector $\vec{\hat{e}}$ is directed towards the pericenter and the vertical axis $\vec{\hat{h}}$ is aligned with the angular momentum of the orbit. 

We start by considering the conservation of the Laplace–Runge–Lenz vector $\vec{A}= \vec{p} \times \vec{L} - \mu \beta^2 \vec{\hat{r}}$, where $\vec{L} = \vec{r} \times \vec{p}$ denotes the angular momentum of the orbit and $\vec{\hat{r}} = \vec{r}/r$ \citep{Goldstein2002}. Defining the eccentricity vector as $\vece = \vec{A} / \mu \beta^2 = e \vec{\hat{e}}$, one has:
\begin{equation}
\label{eq:runge_lenz_2}
\vecrho = \rho \vec{\hat{r}} = \vecptilde \times \vech - \rho \vece,
\end{equation}
where we denoted $\vecptilde = \rho \vec{p} / \beta n a$; $\vech = \vec{L} / \Lambda = \phi \vec{\hat{h}}$ is the dimensionless angular momentum and $\rho = \Vert \vecrho \Vert$. The vector $\vecptilde$ can be readily written in the reference frame attached to the orbit:
\begin{equation}
\vecptilde = 
\begin{bmatrix} -\sin{E} \\ \phi \cos{E} \\ 0 \end{bmatrix}_\mathcal{O} = 
\begin{bmatrix} -\sin{E} \\ \cos{E} \\ 0 \end{bmatrix}_\mathcal{O} -
\psi e \cos(E) \begin{bmatrix} 0 \\ e \\ 0 \end{bmatrix}_\mathcal{O}.
\end{equation}
We restate this expression as follows: 
\begin{equation}
\label{eq:dimensionless_momentum}
\vecptilde = \vec{\hat{h}} \times \left[ \vecrhotilde - \psi (\vecrhotilde \cdot \vece) \vece \right],
\end{equation}
where we defined the vector $\vecrhotilde$ as:
\begin{equation}
\vecrhotilde = 
\begin{bmatrix} \cos{E} \\ \sin{E} \\ 0 \end{bmatrix}_\mathcal{O} .
\end{equation}
From Eqs.~\eqref{eq:distance}, \eqref{eq:runge_lenz_2}, and \eqref{eq:dimensionless_momentum}, we obtain the following expression for the dimensionless position vector:
\begin{equation}
\label{eq:rho}
\vecrho = \phi \vecrhotilde - (1 - \psi \vecrhotilde \cdot \vece) \vece .
\end{equation}

Although the expression of the vector $\vecrhotilde$ is simple in the reference frame of the orbit, throughout the paper we will make use of the inertial reference frame $\mathcal{I}$. Therefore, we consider the following relation: 
\begin{equation}
\label{eq:rho_tilde_inertial_frame}
\vecrhotilde = \rot{\vec{h}}(\omega + E) \rot{\vec{\zeta}}(\inc) \rot{\vec{k}}(\Omega) \, \vec{\hat{\imath}} ,
\end{equation}
where $\rot{\vec{u}}(\theta) \vec{v}$ denotes the rotation of a vector $\vec{v} \in \mathbb{R}^3$ around the unit vector $\vec{\hat{u}}$ by an angle $\theta$; $\omega$ is the argument of the pericenter and the vector $\veczeta = \sin(\inc/2) \vechat{\zeta}$ points towards the ascending node of the orbit, that is,
\begin{equation}
\label{eq:zeta}
\veczeta = \sin(\inc/2)
\begin{bmatrix} \cos{\Omega} \\ \sin{\Omega} \\ 0 \end{bmatrix}_\mathcal{I} .
\end{equation}
We recall Rodrigues' rotation formula \cite{Euler1776,Rodrigues1840,Gibbs1901}:
\begin{equation}
\label{eq:rotation}
\begin{aligned}
\rot{\vec{u}}(\theta) \vec{v} &= \vec{v} + (1-\cos \theta) \, \vechat{u} \times (\vechat{u} \times \vec{v}) + \sin(\theta) \, \vechat{u} \times \vec{v} \\
&= \cos(\theta) \, \vec{v} + (1-\cos \theta) (\vechat{u} \cdot \vec{v}) \, \vechat{u}, + \sin(\theta) \, \vechat{u} \times \vec{v} ,
\end{aligned}
\end{equation}
The rotation of a generic vector $\vec{v}$ around the line of nodes by an angle $\inc$ reads as follows:
\begin{equation}
\label{eq:rotation_zeta}
\rot{\vec{\zeta}}(\inc) \, \vec{v} = 
\vec{v} + 2 \, \veczeta \times (\veczeta \times \vec{v}) + 2 \chi \, \veczeta \times \vec{v} . 
\end{equation}
In particular, the direction of the angular momentum vector is given by:
\begin{equation}
\label{eq:angular_momentum}
\vechat{h} = \rot{\vec{\zeta}}(\inc) \, \vechat{k} = 
(1-2 \veczeta \cdot \veczeta) \, \vechat{k} + 2 \chi \, \veczeta \times \vechat{k}. 
\end{equation}
To explicitly perform the rotations in Eq.~\eqref{eq:rho_tilde_inertial_frame}, we present the following lemma.
\begin{lemma}[]
\label{lemma1}
For any $\vec{v} \in \mathbb{R}^3$ and $\theta \in [0,2\pi)$ one has: 
\begin{equation}
\label{eq:lemma}
\rot{\vec{h}}(\theta) \rot{\vec{\zeta}}(\inc) \vec{v} = \rot{\vec{\zeta}}(\inc) \rot{\vec{k}}(\theta) \vec{v} .
\end{equation}
\end{lemma}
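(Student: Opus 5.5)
The plan is to use the conjugation property of rotations. Write $R = \rot{\vec{\zeta}}(\inc)$ and recall from Eq.~\eqref{eq:angular_momentum} that $\vechat{h} = R\,\vechat{k}$. The claim~\eqref{eq:lemma} is equivalent to the operator identity $\rot{\vec{h}}(\theta) = R\, \rot{\vec{k}}(\theta)\, R^{-1}$. This is the standard fact that conjugating a rotation about an axis by a proper rotation $R$ gives the rotation by the same angle about the image axis. Applying this identity to the vector $R\vec{v}$ then yields exactly~\eqref{eq:lemma}.

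To keep the argument self-contained in the vector language of the paper, I will verify the identity directly with Rodrigues' formula~\eqref{eq:rotation}. Apply $\rot{\vec{k}}(\theta)$ to $\vec{v}$:
\begin{equation*}
\rot{\vec{k}}(\theta)\vec{v} = \cos(\theta)\, \vec{v} + (1-\cos\theta)(\vechat{k}\cdot\vec{v})\,\vechat{k} + \sin(\theta)\, \vechat{k}\times\vec{v}.
\end{equation*}
Then apply the linear map $R$ term by term, using three properties of a proper orthogonal map.
\begin{itemize}
\item $R$ is linear.
\item $R$ preserves scalar products: $\vechat{k}\cdot\vec{v} = (R\vechat{k})\cdot(R\vec{v}) = \vechat{h}\cdot(R\vec{v})$.
\item $R$ commutes with the cross product: $R(\vechat{k}\times\vec{v}) = (R\vechat{k})\times(R\vec{v}) = \vechat{h}\times R\vec{v}$.
\end{itemize}
Substituting gives
\begin{equation*}
R\,\rot{\vec{k}}(\theta)\vec{v} = \cos(\theta)\, R\vec{v} + (1-\cos\theta)(\vechat{h}\cdot R\vec{v})\,\vechat{h} + \sin(\theta)\,\vechat{h}\times R\vec{v}.
\end{equation*}
By Eq.~\eqref{eq:rotation} with $\vechat{u}=\vechat{h}$, the right-hand side is precisely $\rot{\vec{h}}(\theta) R\vec{v}$.

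The only step that is not purely formal is the compatibility of $R$ with the cross product, $R(\vec{a}\times\vec{b}) = R\vec{a}\times R\vec{b}$, which holds because $\det R = +1$. I would justify it by noting that the rotation in Eq.~\eqref{eq:rotation_zeta} is a Rodrigues rotation: it is orthogonal, and it is continuously connected to the identity as the angle goes to $0$, so its determinant is $1$. Alternatively, one could check the identity directly from Eq.~\eqref{eq:rotation_zeta} using the vector triple product expansion, but that is routine.

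I expect no serious obstacle. The main care point is consistency of sign conventions: $\vechat{h}$ must be exactly $R\vechat{k}$ with the same orientation, so that the angle $\theta$ is measured in the same sense about both axes. This is guaranteed by Eq.~\eqref{eq:angular_momentum}.
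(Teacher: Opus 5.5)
Your argument is correct, but it takes a different route from the paper. The paper (Appendix~A) encodes the three rotations as unit quaternions $q_1=\cos(\theta/2)+\sin(\theta/2)\vechat{h}$, $q_2=\chi+\veczeta$ and $q_3=\cos(\theta/2)+\sin(\theta/2)\vechat{k}$. It multiplies out $q_1q_2$ and $q_2q_3$ using $\vechat{h}\cdot\veczeta=\vechat{k}\cdot\veczeta=0$. It then substitutes the explicit expression~\eqref{eq:angular_momentum} for $\vechat{h}$ to verify $\chi\vechat{h}+\vechat{h}\times\veczeta=\chi\vechat{k}+\veczeta\times\vechat{k}$, whence $q_1q_2=q_2q_3$.

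You instead prove the general transport rule $R\,\rot{\vec{k}}(\theta)\,R^{-1}=R_{R\vechat{k}}(\theta)$ for an arbitrary proper rotation $R$, by pushing Rodrigues' formula through $R$. The only lemma-specific input is then $\vechat{h}=R\vechat{k}$. Your route is coordinate-free and more general: it needs neither the explicit form of $\vechat{h}$, nor the fact that $\veczeta$ lies in the reference plane, nor quaternions. It also shows the lemma to be a pure covariance statement.

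Its cost is that you must know the map in~\eqref{eq:rotation_zeta} is orthogonal with determinant $+1$, so that $R(\vec{a}\times\vec{b})=R\vec{a}\times R\vec{b}$; you rightly flag and justify this. The paper hides these facts inside the quaternion--rotation homomorphism and reduces the proof to a short algebraic identity. As a by-product it obtains the explicit quaternion $\cos(\theta/2)(\chi+\veczeta)+\sin(\theta/2)(\chi\vechat{k}+\veczeta\times\vechat{k})$ of the composite rotation. At bottom the two proofs express the same fact, since $q_1q_2=q_2q_3$ is just $q_1=q_2q_3q_2^{-1}$.
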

The demonstration is a straightforward but tedious application of Eqs.~\eqref{eq:rotation}, \eqref{eq:rotation_zeta}, and \eqref{eq:angular_momentum}. In Appendix~\ref{sec:A0} we present a shorter derivation exploiting the connection between spatial rotations and quaternions. Applying the lemma to Eq.~\eqref{eq:rho_tilde_inertial_frame}, one obtains:
\begin{equation}
\label{eq:rho_tilde_2}
\begin{aligned}
\vecrhotilde &= 
\rot{\vec{\zeta}}(\inc) \rot{\vec{k}}(F) \, \vec{\hat{\imath}} = \rot{\vec{\zeta}}(\inc) \vecrho_0 \\
&= \vecrho_0 + 2 \, \veczeta \times (\veczeta \times \vecrho_0) + 2 \chi \, \veczeta \times \vecrho_0 ,
\end{aligned}
\end{equation}
where $\vecrho_0$ is the dimensionless position vector in the circular and coplanar limit:
\begin{equation}
\label{eq:rho0}
\vecrho_0 = \rot{\vec{k}}(F) \, \vec{\hat{\imath}} = 
\begin{bmatrix} \cos{F} \\ \sin{F} \\ 0 \end{bmatrix}_\mathcal{I} . 
\end{equation}
As Eq.~\eqref{eq:rho_tilde_2} shows, with the introduction of the eccentric longitude $F$, the vector $\vecrhotilde$ does not depend on the eccentricity vector $\vece$ anymore: it only depends on the orientation of the orbital plane through the vector $\veczeta$. So the eccentricity and inclination dependence of the position vector $\vecrho$ neatly splits up in Eq.~\eqref{eq:rho}. 

In a similar manner, we obtain the following expression for the eccentricity vector:
\begin{equation}
\begin{aligned}
\label{eq:ecc_vec}
\vece &= e \, 
\rot{\vec{h}}(\omega) \rot{\vec{\zeta}}(\inc) \rot{\vec{k}}(\Omega) \, \vec{\hat{\imath}}
= e \, \rot{\vec{\zeta}}(\inc) \rot{\vec{k}}(\varpi) \, \vec{\hat{\imath}} = \rot{\vec{\zeta}}(\inc) \, \vecz \\
&= 
\vecz + 2 \, \veczeta \times (\veczeta \times \vecz) + 2 \chi \, \veczeta \times \vecz , 
\end{aligned}
\end{equation}
where we defined the vector $\vecz$ as:
\begin{equation}
\label{eq:z}
\vecz = e \, \rot{\vec{k}}(\varpi) \, \vec{\hat{\imath}} = e
\begin{bmatrix} \cos{\varpi} \\ \sin{\varpi} \\ 0 \end{bmatrix}_\mathcal{I} .
\end{equation}
Noting that $\vecrhotilde \cdot \vece = \vecrho_0 \cdot \vecz$, from Eqs.~\eqref{eq:rho} and \eqref{eq:rho_tilde_2} we obtain the following final expression for the dimensionless position vector:
\begin{equation}
\label{eq:rho_2}
\begin{aligned}
\vecrho &= \vecrho_0 + \vec\delta, \\
\vec\delta &= (\phi-1) \vecrho_0 + 2 \phi \, [ \veczeta \times (\veczeta \times \vecrho_0) +  \chi \,\veczeta \times \vecrho_0 ] - ( 1 - \psi \vecz \cdot \vecrho_0) \vece
\end{aligned}
\end{equation}
The vector $\vecrho_0$ defines the position of the planet along its instantaneous Keplerian orbit, while the vectors $\vecz$ and $\veczeta$ encode the two slow degrees of freedom related to eccentricity and inclination, respectively. The three vectors relate in a natural and standard way to the complex variables $\E^{\iota F}$, $z$ and $\zeta$, which are often employed in the expansion of the planetary Hamiltonian: 
\begin{equation}
\label{eq:vectors_complex_vars}
\vecrho_0 = \begin{bmatrix} \real(\E^{\iota F}) \\ \imag(\E^{\iota F}) \\ 0 \end{bmatrix}_\mathcal{I} , \quad
\vecz = \begin{bmatrix} \real(z) \\ \imag(z) \\ 0 \end{bmatrix}_\mathcal{I} , \quad
\veczeta = \begin{bmatrix} \real(\zeta) \\ \imag(\zeta) \\ 0 \end{bmatrix}_\mathcal{I} .
\end{equation}
We stress that $\vecrho_0$, $\vecz$, and $\veczeta$ are all parallel to the invariant plane of the planetary system.
When the eccentricities and mutual inclinations of the orbits are small, the norm of the vector $\vec\delta$ is much smaller than that of $\vecrho_0$; this property will be used as a basis for the expansion of the disturbing function presented in Sect.~\ref{sect:expansion_direct_part}.

Finally, it can be useful to consider the decomposition $\vec\delta = \vec\delta_\parallel + \vec\delta_\perp$ where:
\begin{equation}
\label{eq:vecdelta_decomposition}
\begin{aligned}
\vec\delta_\parallel &= (\phi-1) \vecrho_0 + 2 \phi \, \veczeta \times (\veczeta \times \vecrho_0) - ( 1 - \psi \vecz \cdot \vecrho_0) \vece_\parallel , \\
\vec\delta_\perp &= 2 \phi \chi \, \veczeta \times \vecrho_0 - ( 1 - \psi \vecz \cdot \vecrho_0) \vece_\perp .
\end{aligned}
\end{equation}
Here, $\vec\delta_\parallel$ is the component of $\vec\delta$ that is parallel to the invariant plane, while $\vec\delta_\perp$ is the component that is perpendicular to it; we decomposed in a similar way the eccentricity vector as $\vece = \vece_\parallel + \vece_\perp$ where:
\begin{equation}
\label{eq:vece_decomposition}
\begin{aligned}
\vece_\parallel &= \vecz + 2 \, \veczeta \times (\veczeta \times \vecz) , \\
\vece_\perp &= 2 \chi \, \veczeta \times \vecz .
\end{aligned}
\end{equation}

\subsection{Expansion for small eccentricities and inclinations}
\label{sect:expansion_direct_part}

Using Eqs.~\eqref{eq:rho_2}, the direct part of the disturbing function [Eq.~\eqref{eq:direct_part}] can be written as follows:
\begin{equation}
\begin{aligned}
\label{eq:direct_part_2}
\frac{a'}{\Delta} = \frac{1}{\Vert \vec\nu + \vec\epsilon \Vert} ,
\end{aligned}
\end{equation}
where $\vec\nu = \vecrho'_0 - \alpha \vecrho_0$ and $\vec\epsilon = \vec\delta' - \alpha \vec\delta$.
In the regime of small eccentricities and mutual inclinations of the orbits, one has $\Vert \vec\epsilon \Vert \ll \Vert \vec\nu \Vert$. Given the vector form of Eq.~\eqref{eq:direct_part_2}, we can introduce the classical expansion of the gravitational potential in terms of Legendre polynomials\footnote{Interestingly enough, Legendre polynomials are usually introduced when expanding the inverse of the mutual distance in the limit of a small semi-major axis ratio (for arbitrary eccentricities and inclinations).}:
\begin{equation}
\begin{aligned}
\label{eq:legendre}
\frac{a'}{\Delta} 
&= \sum_{l=0}^{+\infty} \frac{\Vert\vec\epsilon\Vert^{l}}{\Vert\vec\nu\Vert^{l+1}} 
P_l \! \left( \frac{\vec\epsilon \cdot \vec\nu}{\Vert\vec\epsilon\Vert \Vert\vec\nu\Vert} \right)  \\
&= \sum_{l=0}^{+\infty} \sum_{k=0}^{\lfloor l/2 \rfloor} 
\frac{(-1)^{l-k}}{2^l} \binom{l}{k} \binom{2l-2k}{l} 
\frac{\Vert\vec\epsilon\Vert^{2k}}{\Vert\vec\nu\Vert^{1+2(l-k)}} (\vec\nu \cdot \vec\epsilon)^{l-2k} ,
\end{aligned}
\end{equation}
where $P_l$ is the Legendre polynomial of degree $l$. The norm of the vector $\vec\nu$ only depends on the difference of the eccentric longitudes, that is,
\begin{equation}
\label{eq:vector_nu_norm}
\Vert\vec\nu\Vert = \sqrt{1 + \alpha^2 -2\alpha\cos(F'-F)} .
\end{equation}
We notice that, using the decompositions~\eqref{eq:vecdelta_decomposition} and \eqref{eq:vece_decomposition}, and defining the vectors $\vec\epsilon_\parallel = \vec\delta'_\parallel - \alpha \vec\delta_\parallel$ and $\vec\epsilon_\perp = \vec\delta'_\perp - \alpha \vec\delta_\perp$, one has $\Vert\vec\epsilon\Vert^{2} = \Vert\vec\epsilon_\parallel\Vert^{2} + \Vert\vec\epsilon_\perp\Vert^{2}$ and $\vec\nu \cdot \vec\epsilon = \vec\nu \cdot \vec\epsilon_\parallel$. 

Using vector algebra identities in a systematic way \cite{Liang2007}, expansion~\eqref{eq:legendre} can be fully written in terms of scalar products that involve vectors belonging to the following set:
\begin{equation}
V = \{ \vecrho_0, \vecrho_0', \vecz, \vecz', \veczeta, \veczeta' \} .
\end{equation}
Although it is straightforward, using the multinomial theorem, to obtain an explicit (but cumbersome) expansion, Eq.~\eqref{eq:legendre} can be easily implemented in a computer algebra system that allows for manipulation of symbolic vectors. Therefore, in the following we focus on the structure of the different terms that appear in the expansion. Among all the possible scalar products, we have three terms of degree 0 in eccentricity and inclinations of the orbits:
\begin{equation}
\label{eq:terms_degree_0}
\vec\rho_0 \cdot \vecrho_0 = 1, \quad \vec\rho_0' \cdot \vecrho_0' = 1, \quad 
\vec\rho_0 \cdot \vecrho_0' = \cos(F'-F) .
\end{equation}
We then have eight different terms of degree 1, which depend on $F,F'$:
\begin{equation}
\label{eq:terms_degree_1}
\vec\rho_0 \cdot \vec{u}_i, \quad \vec\rho_0' \cdot \vec{u}_i, \quad \mathrm{with}\enspace  
\vec{u}_i \in V_1 = \{ \vecz, \vecz', \veczeta, \veczeta' \} .
\end{equation}
Finally, we have ten scalar products of degree 2, which do not depend on $F,F'$:
\begin{equation}
\label{eq:terms_degree_2}
\vec{u}_i \cdot \vec{u}_j, \quad \mathrm{with}\enspace \vec{u}_i, \vec{u}_j \in V_1 .
\end{equation}

We notice that the quantities $\phi = \sqrt{1-\vecz \cdot \vecz}$, $\chi = \sqrt{1-\veczeta \cdot \veczeta}$ and $\psi = (1+\phi)^{-1}$ enter expansion~\eqref{eq:legendre} through Eq.~\eqref{eq:rho_2}. They can be developed as power series in the scalar products $\vecz \cdot \vecz$ or $\veczeta \cdot \veczeta$, resulting in a direct part that is completely expanded in eccentricities and inclinations. It is then possible to collect together all the terms of the same total degree $d$. By doing that, expansion~\eqref{eq:legendre} can be formally restated as follows:
\begin{equation}
\begin{aligned}
\label{eq:legendre_trunc}
\frac{a'}{\Delta} = \sum_{d=0}^{+\infty}
\sum_\mathcal{N}
\Gamma_{\mathcal{N}}
\frac{\alpha^l}{\Vert\vec\nu\Vert^s}
(\vecrho_0 \cdot \vecrho_0')^m 
\left( \prod_{i=1}^4 (\vec\rho_0 \cdot \vec{u}_i)^{p_i} (\vec\rho_0' \cdot \vec{u}_i)^{p'_i} \right)
\! \prod_{1\leq j\leq k\leq 4} \! (\vec{u}_j \cdot \vec{u}_k)^{q_{jk}} .
\end{aligned}
\end{equation}
Here we fixed $(\vec{u}_1,\vec{u}_2,\vec{u}_3,\vec{u}_4) = (\vecz, \vecz', \veczeta, \veczeta')$ and $\Gamma_{\mathcal{N}}$ is a rational number that depends on the tuple of non-negative integers $\mathcal{N} = (s,l,m,p_1,\dots,p_4,p'_1,\dots,p'_4,q_{11},\dots,q_{44})$. The summation over $\mathcal{N}$ depends on $d$ and verify the following constraints:
\begin{equation}
\label{eq:sum_N}
\begin{aligned}
\min(3,2d+1) \leq s \leq 2d+1, \quad
0 \leq l \leq 2d, \quad
0 \leq m \leq \lfloor d/2 \rfloor, \\
\sum_{i=1}^4 \left(p_i + p_i'\right) + \sum_{1\leq j\leq k\leq 4} 2 \, q_{jk} = d .
\end{aligned}
\end{equation}
Not all the coefficients $\Gamma_{\mathcal{N}}$ that are allowed by these constrains are different from zero. For example, each addend of the summation~\eqref{eq:legendre_trunc} is invariant under the transformation $\veczeta, \veczeta' \rightarrow -\veczeta, -\veczeta'$, that is, $p_3 + p_4 + \sum_{j=1}^2 (q_{j3} + q_{j4})$ is an even number. This is consistent with the invariance of the direct part of the disturbing function under reflection of the position vectors $\vecrho, \vecrho'$ with respect to the invariant plane\footnote{Expansion~\eqref{eq:legendre_trunc} is also clearly invariant under any spatial rotation of the position vectors, as it is fully written in terms of scalar products.}.

In Appendix~\ref{sec:A1_direct_part_2} we give, as an example, the expansion of the direct part of the disturbing function to degree 2 in eccentricities and inclinations, which has been obtained by implementing Eq.~\eqref{eq:legendre} in the computer algebra system TRIP \cite{Gastineau2011,TRIP}.

\section{Indirect part of the disturbing function}
\label{sec:indirect_part}

The indirect part of the disturbing function $H_1$ related to the interaction of a pair of planets with barycentric momenta $\vec{p}$ and $\vec{p}'$ is: 
\begin{equation}
\label{eq:indirect_part}
\frac{\vec{p} \cdot \vec{p}'}{m_0} = \frac{\beta \beta' n n' a a'}{m_0} \frac{\vecptilde \cdot \vecptilde'}{\rho \rho'} ,
\end{equation}
with $\rho = 1 - \vecrho_0 \cdot \vecz$ [see Eq.~\eqref{eq:distance}]. 
Similarly to the direct part, we show here how Eq.~\eqref{eq:indirect_part} can be expressed in a way that only involves vectors belonging to the set $V$. 

Equation~\eqref{eq:dimensionless_momentum} can be restated as $\vecptilde = \vec{\hat{h}} \times \vec{q}$ with $\vec{q} = \vecrhotilde - \psi (\vecrhotilde \cdot \vece) \vece$. Therefore, 
\begin{equation}
\vecptilde \cdot \vecptilde' = (\vechat{h} \cdot \vechat{h}')(\vec{q} \cdot \vec{q}') - (\vechat{h} \cdot \vec{q}')(\vechat{h}' \cdot \vec{q}) .
\end{equation}
Using Eqs.~\eqref{eq:rho_tilde_2} and \eqref{eq:vece_decomposition}, we consider the decomposition $\vec{q} = \vec{q}_\parallel + \vec{q}_\perp$ where:
\begin{equation}
\label{eq:vecq}
\begin{aligned}
\vec{q}_\parallel &= \vecrho_0 + 2 \, \veczeta \times (\veczeta \times \vecrho_0) - \psi (\vecrho_0 \cdot \vecz) \vece_\parallel, \\
\vec{q}_\perp &= 2 \chi \, \veczeta \times \vecrho_0 - \psi (\vecrho_0 \cdot \vecz) \vece_\perp .
\end{aligned}
\end{equation}
Using Eq.~\eqref{eq:angular_momentum}, one gets: 
\begin{equation}
\begin{aligned}
\vechat{h} \cdot \vechat{h}' &= (1-2 \veczeta \cdot \veczeta)(1-2 \veczeta' \cdot \veczeta') + 4 \chi \chi' \veczeta \cdot \veczeta', \\
\vec{q} \cdot \vec{q}' &= \vec{q}_\parallel \cdot \vec{q}'_\parallel + \vec{q}_\perp \cdot \vec{q}'_\perp, \\
(\vechat{h} \cdot \vec{q}')(\vechat{h}' \cdot \vec{q}) &= [(1-2 \veczeta \cdot \veczeta)\vec{q}'_\perp + 2 \chi \, \vec{q}'_\parallel \times \veczeta] \cdot [(1-2 \veczeta' \cdot \veczeta')\vec{q}_\perp + 2 \chi' \, \vec{q}_\parallel \times \veczeta'].
\end{aligned}
\end{equation}
Therefore, using vector algebra identities, Eq.~\eqref{eq:indirect_part} can be restated in terms of scalar products of vectors belonging to the set $V$. The indirect part can be eventually fully expanded in eccentricities and inclinations by developing the quantities $\psi = (1+\sqrt{1-\vecz \cdot \vecz})^{-1}$ and $\chi = \sqrt{1-\veczeta \cdot \veczeta}$, as well as $\psi'$ and $\chi'$. 

\section{Secular disturbing function}
\label{sect:sec_ham}

Let us denote as $h_1$ the contribution to the disturbing function $H_1$ raised by the interaction of a given couple of planets:
\begin{equation}
\label{eq:h1}
h_1 = \frac{\vec{p} \cdot \vec{p}'}{m_0} - \frac{G m m'}{\Delta} .
\end{equation}
In this section, we show how to compute the secular disturbing function, which consists in the null-frequency component of $h_1$ when expanded as a Fourier series with respect to the mean longitudes, that is:
\begin{equation}
\label{eq:sec_ham}
\left< h_1 \right> = 
\frac{1}{(2\pi)^2}
\int_0^{2\pi} \! \int_0^{2\pi} h_1 \, d\lambda d\lambda' .
\end{equation}

As both the direct and the indirect parts of $h_1$ are expressed in terms of the excentric longitudes $F$ and $F'$, we use $d\lambda/dF = \rho = 1 - \vecrho_0 \cdot \vecz$ to change the integration variables:
\begin{equation}
\label{eq:sec_ham_FF'}
\left< h_1 \right> = 
\frac{1}{(2\pi)^2}
\int_0^{2\pi} \! \int_0^{2\pi} \rho \rho' \, h_1 \, dF dF' .
\end{equation}
Since $\left< \vec{p} \right> = \vec{0}$, the indirect part of the disturbing function does not contribute to the integral. To compute the average of the direct part, we recall that in Eq.~\eqref{eq:legendre_trunc} the term $\Vert\vec\nu\Vert$ only depends on $F'-F$. Therefore, we consider the additional change of variables $F' \rightarrow \widetilde{F} = F'-F$ to obtain:
\begin{equation}
\label{eq:sec_direct_part}
\left< \frac{a'}{\Delta} \right> = 
\frac{1}{(2\pi)^2}
\int_0^{2\pi} \! \int_0^{2\pi} \rho \rho' \frac{a'}{\Delta} \, dF d \widetilde{F} .
\end{equation}
To explicitly compute the average of Eq.~\eqref{eq:legendre_trunc}, we first notice that: 
\begin{equation}
\label{eq:vecrho0_prime}
\vecrho_0' = \cos(\widetilde{F}) \, \vecrho_0 + \sin(\widetilde{F}) \, \vechat{k} \times \vecrho_0 .
\end{equation}
Using this relation, one can express all the scalar products involving vector $\vecrho_0'$ as:
\begin{equation}
\label{eq:scalar_products_rho'}
\vec\rho_0' \cdot \vec{u}_i = \cos(\widetilde{F}) \, \vecrho_0 \cdot \vec{u}_i + \sin(\widetilde{F}) \, \vecrho_0 \cdot (\vec{u}_i \times \vechat{k}) ,
\end{equation}
where $\vec{u}_i \in V_1$. 
Replacing this identity in Eq.~\eqref{eq:legendre_trunc}, the eight different scalar products of degree 1 in eccentricity and inclinations that appear in the direct part can all be written in the following form:
\begin{equation}
\label{eq:terms_degree_1_v2}
\vec\rho_0 \cdot \vec{u}_i, \quad \mathrm{with}\enspace  
\vec{u}_i \in V_1 \cup \widetilde{V}_1, \ \enspace 
\widetilde{V}_1 = \{ 
\vecz \times \vechat{k}, \vecz' \times \vechat{k}, \veczeta \times \vechat{k}, \veczeta' \times \vechat{k} \} .
\end{equation}
We notice that all the vectors in $\widetilde{V}_1$ are parallel to the invariant plane, similarly to those in the set $V_1$.

Once the direct part has been expressed as a function of $F$ and $\widetilde{F}$, the integral of Eq.~\eqref{eq:legendre_trunc} over the variable $\widetilde{F}$ can be readily computed in terms of Laplace coefficients \cite[][and references therein]{Laskar1995}:
\begin{equation}
\label{eq:laplace_coeffs}
\frac{1}{2\pi} \int_0^{2\pi} \frac{\E^{-\iota \kappa \widetilde{F}}}{\Vert\vec\nu\Vert^s} d \widetilde{F} = 
\frac{1}{2} \lapl{s}{\kappa}(\alpha) ,
\end{equation}
where $\kappa \in \mathbb{Z}$. 

We are then left with a semi-averaged direct part that only depends on the eccentric longitude $F$ through the scalar products in Eq.~\eqref{eq:terms_degree_1_v2}. Therefore, the integral over $F$ appearing in Eq.~\eqref{eq:sec_direct_part} involves averages of the following form:
\begin{equation}
\left< \prod_{i=1}^n \vec\rho_0 \cdot \vec{u}_i \right> ,
\end{equation}
where $\vec{u}_i \in V_1 \cup \widetilde{V}_1$ (two or more of these vectors can be identical) and $n \leq d$ if the direct part is truncated to total degree $d$ in eccentricity and inclination. For $n$ small, these averages can be easily computed: 
\begin{equation}
\label{eq:averages}
\begin{aligned}
\left< \vec\rho_0 \cdot \vec{u}_1 \right> &= 0, \\ 
\left< (\vec\rho_0 \cdot \vec{u}_1)(\vec\rho_0 \cdot \vec{u}_2) \right> &= \frac{\vec{u}_1 \cdot \vec{u}_2}{2}, \\
\left< (\vec\rho_0 \cdot \vec{u}_1)(\vec\rho_0 \cdot \vec{u}_2)(\vec\rho_0 \cdot \vec{u}_3) \right> &= 0, \\
\left< (\vec\rho_0 \cdot \vec{u}_1)(\vec\rho_0 \cdot \vec{u}_2)(\vec\rho_0 \cdot \vec{u}_3)(\vec\rho_0 \cdot \vec{u}_4) \right> 
&= \frac{1}{8} \left[ 
(\vec{u}_1 \cdot \vec{u}_2)(\vec{u}_3 \cdot \vec{u}_4) \right. \\
 &+ (\vec{u}_1 \cdot \vec{u}_3)(\vec{u}_2 \cdot \vec{u}_4) + \left.(\vec{u}_1 \cdot \vec{u}_4)(\vec{u}_2 \cdot \vec{u}_3) \right] ,
\end{aligned}
\end{equation}
for all $\vec{u}_1,\vec{u}_2,\vec{u}_3,\vec{u}_4 \in V_1 \cup \widetilde{V}_1$. 
The averages are zero for all odd values of $n$, in agreement with d'Alembert rules \cite[e.g.,][]{Morbidelli2002}. For bigger even values of $n$ the expressions become more involved. We present here a proposition that gives the general form of these averages.
\begin{proposition}[]
\label{proposition1}
Let $\vec{u}_1, \vec{u}_2, \dots, \vec{u}_n \in \mathbb{R}^3$ be such that $\vec{u}_i \cdot \vechat{k} = 0$ for $1 \leq i \leq n$. Then:
\begin{equation}
\label{eq:proposition1}
\left< \prod_{i=1}^n \vec\rho_0 \cdot \vec{u}_i \right> = 
\begin{dcases}
    0 \vphantom{\frac{0}{0}} & \text{for $n$ odd}\\
    \frac{1}{2^{n/2} (n/2)!} \sum_{\sigma \in \widetilde{\mathfrak{S}}_{n}} \left[ \prod_{i=1}^{n/2} \vec{u}_{\sigma(2i-1)} \cdot \vec{u}_{\sigma(2i)} \right]  & \text{for $n$ even}
\end{dcases} ,
\end{equation}
where $\mathfrak{S}_n$ is the symmetric group of all permutations of the set $\{1, 2, \dots, n\}$ and $\widetilde{\mathfrak{S}}_n = \{ \sigma \in \mathfrak{S}_n : \sigma(2i-1) < \sigma(2i+1) \ \mathrm{for} \ i = 1,2,\dots,n/2-1 \ \mathrm{and} \ \sigma(2i-1) < \sigma(2i) \ \mathrm{for} \ i = 1,2,\dots,n/2 \}$. 
\end{proposition}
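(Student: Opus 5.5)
Since every $\vec{u}_i$ lies in the invariant plane, I write $\vec{u}_i = \real(w_i)\,\vechat{\imath} + \imag(w_i)\,\vechat{\jmath}$ with $w_i \in \mathbb{C}$, exactly as in Eq.~\eqref{eq:vectors_complex_vars}. Then $\vecrho_0 \cdot \vec{u}_i = \real(\bar{w}_i \E^{\iota F}) = \tfrac12(\bar{w}_i \E^{\iota F} + w_i \E^{-\iota F})$. The product $\prod_{i=1}^n \vecrho_0 \cdot \vec{u}_i$ expands as $2^{-n}$ times a sum over subsets $S \subseteq \{1,\dots,n\}$ of $\prod_{i\in S}\bar{w}_i \prod_{i\notin S} w_i \, \E^{\iota(2|S|-n)F}$. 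Averaging over $F$ (here $\langle\cdot\rangle$ is the uniform average in $F$, as used for Eq.~\eqref{eq:averages}) keeps only the terms with $2|S| = n$. This immediately gives zero for $n$ odd. For $n$ even it gives
\begin{equation*}
\left< \prod_{i=1}^n \vecrho_0 \cdot \vec{u}_i \right> = \frac{1}{2^n} \sum_{|S| = n/2} \prod_{i\in S}\bar{w}_i \prod_{i\notin S} w_i .
\end{equation*}

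Next I expand the right-hand side of Eq.~\eqref{eq:proposition1} in the same variables. The set $\widetilde{\mathfrak{S}}_n$ is precisely a set of representatives of the perfect matchings of $\{1,\dots,n\}$: each pair is written with its smaller element first, and the pairs are ordered by their first elements. Each matching $\{\{a_1,b_1\},\dots,\{a_{n/2},b_{n/2}\}\}$ therefore appears exactly once, and the right-hand side equals $\frac{1}{2^{n/2}(n/2)!}\sum_{\text{matchings}}\prod_j \vec{u}_{a_j}\cdot\vec{u}_{b_j}$. Using $\vec{u}_a\cdot\vec{u}_b = \tfrac12(\bar{w}_a w_b + w_a \bar{w}_b)$, each matching contributes $2^{-n/2}\sum_{\text{orientations}}$, where an orientation chooses in each pair which index carries the conjugate. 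This gives $2^{-n}(n/2)!^{-1}\sum_{(\text{matching},\,\text{orientation})} \prod_{i\in S}\bar{w}_i\prod_{i\notin S}w_i$, with $S$ the set of conjugated indices, which always satisfies $|S|=n/2$.

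The key step is a counting argument: for a fixed $S$ with $|S|=n/2$, the number of (matching, orientation) pairs whose conjugated set is $S$ equals the number of bijections $S \to S^c$, namely $(n/2)!$. Indeed, each pair must contain exactly one element of $S$, because the oriented pair places one element in $S$ and one outside. Hence the weighted sum collapses to $2^{-n}\sum_{|S|=n/2}\prod_{S}\bar{w}_i\prod_{S^c}w_i$, matching the first paragraph.

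I expect the main obstacle to be bookkeeping rather than any conceptual difficulty. It consists of checking carefully that $\widetilde{\mathfrak{S}}_n$ enumerates each perfect matching exactly once, with $|\widetilde{\mathfrak{S}}_n| = (n-1)!!$, and of tracking the powers of $2$. As a sanity check I will compare against Eq.~\eqref{eq:averages}: for $n=4$ the formula gives $\frac{1}{4\cdot 2}$ times three pairings, i.e. $\frac18$, and for $n=2$ it gives $\frac12$, both as stated. An alternative route would be to view $\vecrho_0$ as uniform on the circle and use Gaussian/Wick-type moment identities, but the complex-exponential computation above is more direct and self-contained.
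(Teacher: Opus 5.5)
Your proof is correct. It shares its first half with the paper's proof and differs in the combinatorial step.

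The paper writes $\vec u_i = U_i(\cos\theta_i,\sin\theta_i,0)$ and uses the real product-to-sum identity $\prod_i\cos\alpha_i = 2^{-n}\sum_{\vec l\in\{-1,1\}^n}\cos(\vec l\cdot\vec\alpha)$. This is the real form of your expansion over subsets $S$, and its zero-frequency condition $\sum_i l_i=0$ is your condition $2|S|=n$.

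The paper then argues forward from the balanced sum, with $m=n/2$. It rewrites that sum as $(m!)^{-2}$ times a sum over all of $\mathfrak{S}_{2m}$ of $\cos\bigl[\sum_i(\theta_{\sigma(2i-1)}-\theta_{\sigma(2i)})\bigr]$. It factorizes this cosine by iterating the addition formula, noting that the sine terms cancel under the transpositions $(2i-1\;2i)$. Finally it divides out the $2^m m!$ relabelings of a pairing to reach $\widetilde{\mathfrak{S}}_{2m}$.

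You instead also expand the right-hand side, via $\vec u_a\cdot\vec u_b=\tfrac12(\bar w_a w_b + w_a\bar w_b)$, and compare coefficients of the monomials $\prod_{S}\bar w_i\prod_{S^c}w_i$. Your only input is that oriented matchings with conjugated set $S$ are in bijection with bijections $S\to S^c$.

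Your version verifies the pairing formula rather than deriving it, so it needs that formula in hand. In exchange, a clean count replaces the paper's tersely justified sine-cancellation step, and the Wick-type structure of the identity becomes transparent. The paper's version has the merit of producing the pairing form directly from the Fourier expression while staying in the real amplitude--phase variables.
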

\begin{proof}
By assumption, the vectors $\vec{u}_1, \vec{u}_2, \dots, \vec{u}_n$ can be written as follows:
\begin{equation}
\vec{u}_i = U_i \begin{bmatrix} \cos{\theta_i} \\ \sin{\theta_i} \\ 0 \end{bmatrix}_\mathcal{I} \quad \text{for $1 \leq i \leq n$} . 
\end{equation}
From Eq.~\eqref{eq:rho0}, one then has $\prod_{i=1}^n \vec\rho_0 \cdot \vec{u}_i = \prod_{i=1}^n U_i \cos(F-\theta_i)$. 
We then consider the following identity:
\begin{equation}
\prod_{i=1}^n \cos(\alpha_i) = \frac{1}{2^n} \sum_{\vec{l} \in L} \cos(\vec{l} \cdot \vec{\alpha}) ,
\end{equation}
where $L = \{-1,1\}^n$ and $\vec\alpha = (\alpha_1, \alpha_2, \dots, \alpha_n)$. Using this identity, we readily obtain: 
\begin{equation}
\prod_{i=1}^n \cos(F-\theta_i) = \frac{1}{2^n} \sum_{\vec{l} \in L} \cos \left[ \left( \sum_{i=1}^n l_i \right) F - \vec{l} \cdot \vec{\theta} \right] ,
\end{equation}
where $\vec{l} = (l_1, l_2, \dots, l_n)$ and $\vec\theta = (\theta_1, \theta_2, \dots, \theta_n)$. Therefore, the secular average reads as:
\begin{equation}
\left< \prod_{i=1}^n \cos(F-\theta_i) \right> = \frac{1}{2^n} \sum_{\substack{\vec{l} \in L \\ \sum_{i=1}^n l_i = 0}} \cos(\vec{l} \cdot \vec{\theta}) \ .
\end{equation}
From the constrain on the last summation, it straightforwardly follows that the average is zero for $n$ odd. Therefore, in the following we assume that $n = 2m$ is a positive and even integer. In this case, we can write the summation in the last equation in terms of permutations: 
\begin{equation}
\sum_{\substack{\vec{l} \in L \\ \sum_{i=1}^{2m} l_i = 0}} \cos(\vec{l} \cdot \vec{\theta}) =  
\frac{1}{(m!)^2}
\sum_{\sigma \in \mathfrak{S}_{2m}} \cos \left[ (\theta_{\sigma(1)} - \theta_{\sigma(2)}) + \dots + (\theta_{\sigma(2m-1)} - \theta_{\sigma(2m)}) \right] \ .
\end{equation}
Using the identity $\cos(\alpha+\beta) = \cos(\alpha)\cos(\beta) - \sin(\alpha)\sin(\beta)$ iteratively, and taking into account the change of sign of the sine function under permutation, we get:
\begin{equation}
\begin{aligned}
&\sum_{\sigma \in \mathfrak{S}_{2m}} \cos \left[ (\theta_{\sigma(1)} - \theta_{\sigma(2)}) + \dots + (\theta_{\sigma(2m-1)} - \theta_{\sigma(2m)}) \right] = \\
&\sum_{\sigma \in \mathfrak{S}_{2m}} \cos(\theta_{\sigma(1)} - \theta_{\sigma(2)}) \cos \left[ (\theta_{\sigma(3)} - \theta_{\sigma(4)}) + \dots  + (\theta_{\sigma(2m-1)} - \theta_{\sigma(2m)})\right] = \\
&\sum_{\sigma \in \mathfrak{S}_{2m}} \left[ \prod_{i=1}^{m} \cos(\theta_{\sigma(2i-1)} - \theta_{\sigma(2i)})  \right] .
\end{aligned}
\end{equation}
The summation over the symmetric group can be further simplified: 
\begin{equation}
\sum_{\sigma \in \mathfrak{S}_{2m}} \left[ \prod_{i=1}^{m} \cos(\theta_{\sigma(2i-1)} - \theta_{\sigma(2i)})  \right] = 
2^m m! \sum_{\sigma \in \widetilde{\mathfrak{S}}_{2m}} \left[ \prod_{i=1}^{m} \cos(\theta_{\sigma(2i-1)} - \theta_{\sigma(2i)})  \right] ,
\end{equation}
where last summation is restricted to permutations $\sigma$ such that $\sigma(2i-1) < \sigma(2i+1) \ \forall i \in \{1,2,\dots,m-1\}$ and $\sigma(2i-1) < \sigma(2i) \ \forall i \in \{1,2,\dots,m\}$. We notice that, for such permutations, one always has $\sigma(1) = 1$. Aggregating all previous identities, one obtains:
\begin{equation}
\begin{aligned}
\prod_{i=1}^n \left< \vec\rho_0 \cdot \vec{u}_i \right> &= \prod_{i=1}^n U_i \left< \cos(F-\theta_i) \right> \\
&= \frac{1}{2^m m!} \sum_{\sigma \in \widetilde{\mathfrak{S}}_{2m}} \left[ \prod_{i=1}^{m} U_{\sigma(2i-1)} U_{\sigma(2i)} \cos(\theta_{\sigma(2i-1)} - \theta_{\sigma(2i)}) \right] \\
&= \frac{1}{2^m m!} \sum_{\sigma \in \widetilde{\mathfrak{S}}_{2m}} \left[ \prod_{i=1}^{m} \vec{u}_{\sigma(2i-1)} \cdot \vec{u}_{\sigma(2i)} \right] .
\end{aligned}
\end{equation}
\end{proof}
When implemented in a computer algebra system, Eq.~\eqref{eq:proposition1} allows us to systematically perform the average over the eccentric longitude $F$ at any degree in eccentricities and inclinations. 

Equation~\eqref{eq:proposition1} implies that in the secular disturbing function resulting from Eq.~\eqref{eq:sec_direct_part} there appear the following three different kinds of scalar product:
\begin{equation}
\label{eq:scalar_products_with_k}
\begin{aligned}
&\vec{u}_i \cdot \vec{u}_j , \\
&\vec{u}_i \cdot (\vec{u}_j \times \vechat{k}) = \vechat{k} \cdot (\vec{u}_i \times \vec{u}_j) , \\
&(\vec{u}_i \times \vechat{k}) \cdot (\vec{u}_j \times \vechat{k}) = 
(\vec{u}_i \cdot \vec{u}_j)(\vechat{k} \cdot \vechat{k})
- (\vec{u}_i \cdot \vechat{k})(\vec{u}_j \cdot \vechat{k}) = \vec{u}_i \cdot \vec{u}_j ,
\end{aligned}
\end{equation}
where $\vec{u}_i, \vec{u}_j \in V_1$, and we used the identities $\vec{u}_i \cdot \vechat{k} = 0$ and $\vechat{k} \cdot \vechat{k} = 1$. Moreover, from Eq.~\eqref{eq:scalar_products_rho'} and the fact that $\int (\cos\widetilde{F})^n (\sin\widetilde{F})^m / \Vert\vec\nu\Vert^s \, d \widetilde{F} = 0$ for all $m$ odd integers, it follows that the second type of terms in Eq.~\eqref{eq:scalar_products_with_k} only enters the secular disturbing function through the following expressions:
\begin{equation}
\begin{aligned}
[ \vechat{k} \cdot (\vec{u}_i \times \vec{u}_j) ] [ \vechat{k} \cdot (\vec{u}_\kappa \times \vec{u}_l) ] &= 
(\vec{u}_i \times \vec{u}_j) \cdot (\vec{u}_\kappa \times \vec{u}_l) \\
&= (\vec{u}_i \cdot \vec{u}_\kappa)(\vec{u}_j \cdot \vec{u}_l) - (\vec{u}_i \cdot \vec{u}_l)(\vec{u}_j \cdot \vec{u}_\kappa) ,
\end{aligned}
\end{equation}
where $\vec{u}_i, \vec{u}_j, \vec{u}_\kappa, \vec{u}_l \in V_1$ and the first identity comes from the fact that $\vec{u}_i \times \vec{u}_j$ is perpendicular to the invariant plane, and thus aligned with $\vechat{k}$. 

From all the previous considerations, it follows that the secular disturbing function can be fully expressed in terms of scalar products of vectors belonging to the set $V_1$. Moreover, in line with the simplifications demonstrated in \citep{Laskar1995}, the resulting expansion can be further simplified by expressing the coefficients of all the terms of the same degree $d$ by means of two algebraically independent Laplace coefficients:
\begin{equation}
\begin{aligned}
    b_{(d+1)/2}^{(0)}, \ b_{(d+1)/2}^{(1)} \quad \text{for $d$ even} , \\
    b_{[d/2]+3/2}^{(0)}, \ b_{[d/2]+3/2}^{(1)} \quad \text{for $d$ odd} .
\end{aligned}
\end{equation}

As an example, we give in Appendix~\ref{sec:A1_secular_ham_4} the expression of the secular disturbing function up to degree 4 in eccentricities and inclinations, which has been obtained by means of the computer algebra system TRIP.

\section{Arbitrary inequality of the disturbing function}
\label{sect:sec_inequalities}
The disturbing function $h_1$ can be decomposed as a Fourier series of the mean longitudes $\lambda$ and $\lambda'$:
\begin{equation}
\label{eq:fourier}
h_1 = \sum_{\kappa,\kappa' \in \mathbb{Z}} C_{\kappa,\kappa'} \E^{\iota (\kappa \lambda + \kappa' \lambda')} ,
\end{equation}
where the Fourier coefficient corresponding to the inequality $\kappa,\kappa'$ is given by
\begin{equation}
\label{eq:fourier_coeffs}
C_{\kappa,\kappa'} = \left< h_1 \E^{-\iota (\kappa \lambda + \kappa' \lambda')} \right> = 
\frac{1}{(2\pi)^2} \int_0^{2\pi} \! \int_0^{2\pi} \rho \rho' h_1 \E^{-\iota (\kappa \lambda + \kappa' \lambda')} \, dF d \widetilde{F} .
\end{equation}
We illustrate here how to compute any of the inequalities of the disturbing function, generalising the secular average discussed in the previous section. 

In order to obtain final expressions that are real and in a vector form, we exploit the fact that the planetary Hamiltonian is a real function by writing: 
\begin{equation}
\begin{aligned}
h_1 &= \sum_{\substack{\kappa,\kappa' \in \mathbb{Z} \\ \kappa + \kappa' \geq 0}} h_{1}^{\kappa,\kappa'} , \\
h_{1}^{\kappa,\kappa'} &= 
\begin{dcases}
    C_{0,0} & \text{for $\kappa = 0, \kappa' = 0$} \\
    2 \real \left( C_{\kappa,\kappa'} \E^{\iota (\kappa \lambda + \kappa' \lambda')} \right) & \text{otherwise}
\end{dcases} .
\end{aligned}
\end{equation}
To perform the integration in Eq.~\eqref{eq:fourier_coeffs}, we first express the exponential factor in terms of eccentric longitudes using Kepler's equation~\eqref{eq:kepler}. 
From the classical Jacobi–Anger expansions \citep{NIST:DLMF}, one gets:
\begin{equation}
\E^{\iota \kappa \imag(z \E^{-\iota F})} = \sum_{n=-\infty}^{+\infty} (-1)^n J_n(\kappa e) \E^{-\iota n \varpi} \E^{\iota n F} ,
\end{equation}
where $J_n$ are Bessel functions of the first kind. Then, expressing Bessel functions as power series, we write:
\begin{equation}
J_n(\kappa e) \E^{-\iota n \varpi} = {\left( \frac{\kappa \bar{z}}{2} \right)}^n f_n \! \left( \frac{\kappa^2 z \bar{z}}{4} \right) \enspace \mathrm{for} \ n\geq0 , 
\end{equation}
where we defined the functions $f_n$ as the following power series: 
\begin{equation}
\label{eq:f_n}
f_n(t) = \sum_{l=0}^{+\infty} \frac{(-1)^l}{l! (n+l)!} t^l .
\end{equation}
Using Eq.~\eqref{eq:kepler}, we finally get:
\begin{equation}
\begin{aligned}
\label{eq:expi_lambda}
\E^{\iota \kappa \lambda} &= \eta_{\kappa} \E^{\iota \kappa F} , \\
\eta_\kappa &= f_0 \! \left( \frac{\kappa^2 z \bar{z}}{4} \right) + \sum_{n=1}^{+\infty} \kappa^n \frac{ (-\bar{z})^n \E^{\iota n F} + z^n \E^{-\iota n F}}{2^n} f_n \! \left( \frac{\kappa^2 z \bar{z}}{4} \right) .
\end{aligned}
\end{equation}

At this point, one can write $\eta_\kappa$ as a function of the vectors $\vecrho_0,\vecz$ by observing that Eq.~\eqref{eq:vectors_complex_vars} gives:
\begin{equation}
\label{eq:complex_to_vecs}
u_i \bar{u}_j = \vec{u}_i \cdot \vec{u}_j - \iota \vechat{k} \cdot (\vec{u}_i \times \vec{u}_j) ,
\end{equation}
where $\vec{u}_i, \vec{u}_j \in V$ and $u_i, u_j$ are their complex counterparts. By defining the functions
\begin{equation}
\begin{aligned}
g_n(\vec{u}_i,\vec{u}_j) &= \left( \vec{u}_i \cdot \vec{u}_j + \iota \vec{u}_i \cdot (\vec{u}_j \times \vechat{k}) \right)^n \\
&= \sum_{l=0}^n (-1)^{\lfloor l/2 \rfloor} \binom{n}{l}
\left( \iota \vec{u}_i \cdot (\vec{u}_j \times \vechat{k}) \right)^{l \bmod{2}} 
\Vert \vec{u}_i \times \vec{u}_j \Vert^{2 \lfloor l/2 \rfloor} 
(\vec{u}_i \cdot \vec{u}_j)^{n-l} ,
\end{aligned}
\end{equation}
we can restate $\eta_\kappa$ as follows:
\begin{equation}
\begin{aligned}
\eta_\kappa &= \eta_\kappa(\vecrho_0,\vecz) \\
&= f_0 \! \left( \frac{\kappa^2 \Vert\vecz\Vert^2}{4} \right) 
+ \sum_{n=1}^{+\infty} \kappa^n \frac{(-1)^n \overline{g_n(\vecrho_0,\vecz)} + g_n(\vecrho_0,\vecz)}{2^n} \, f_n \! \left( \frac{\kappa^2 \Vert\vecz\Vert^2}{4} \right) ,
\end{aligned}
\end{equation}
As a result, $\eta_\kappa$ can be fully expressed in terms of scalar products of the form~\eqref{eq:terms_degree_1_v2}. We can then restate Eq.~\eqref{eq:fourier_coeffs} as follows:
\begin{equation}
\label{eq:fourier_coeffs_2}
C_{\kappa,\kappa'} = \left< h_1 \left( \overline{\eta}_{\kappa'}' \E^{-\iota \kappa' \widetilde{F}} \right) \overline{\eta}_\kappa \E^{-\iota (\kappa + \kappa') F} \right> ,
\end{equation}
where $\overline{\eta}_{\kappa}' \vcentcolon= \overline{\eta_{\kappa}(\vecrho_0',\vecz')}$ and we point out that $g_n(\vecrho_0',\vecz') = \E^{-\iota n \widetilde{F}} g_n(\vecrho_0,\vecz')$. 

Now, we observe that Eq.~\eqref{eq:complex_to_vecs} allows us to write: 
\begin{equation}
\label{eq:exp_factors}
\begin{aligned}
\E^{-\iota (\kappa + \kappa') F} \E^{\iota (\kappa + \kappa') \lambda} &= 
g_{\kappa+\kappa'}(\vecrho_0,\vec\ell) \\
\E^{\iota \kappa' (\lambda'-\lambda)} &= 
g_{\kappa'}(\vec\ell,\vec\ell')
\end{aligned} ,
\end{equation}
where we defined the vector $\vec\ell$ as
\begin{equation}
\label{eq:vec_ell}
\vec\ell = \begin{bmatrix} \cos \lambda \\ \sin \lambda \\ 0 \end{bmatrix}_\mathcal{I} ,
\end{equation}
and assumed both $\kappa + \kappa' \geq 0$ and $\kappa' \geq 0$\footnote{Generalization to the other cases is straightforward through complex conjugation in Eq.~\eqref{eq:exp_factors}.}. We point out that, similarly to $\vecrho_0,\vecz$ and $\veczeta$, the vector $\vec\ell$ also lies in the invariant plane of the planetary system. Employing the expressions~\eqref{eq:exp_factors}, one can write:
\begin{equation}
\label{eq:fourier_coeffs_3}
C_{\kappa,\kappa'} \E^{\iota (\kappa \lambda + \kappa' \lambda')} 
= \left< 
h_1 \left( \overline{\eta}_{\kappa'}' \E^{-\iota \kappa' \widetilde{F}} \right) \overline{\eta}_\kappa 
\, g_{\kappa+\kappa'}(\vecrho_0,\vec\ell)
\right> 
g_{\kappa'}(\vec\ell,\vec\ell') ,
\end{equation}
where $\vec\ell$ is meant to be a fixed parameter in the integration with respect to $F$. 

Using Eq.~\eqref{eq:f_n}, the function $\eta_\kappa$ can be straightforwardly expanded in series and truncated at a given maximum degree in eccentricity. Truncating the disturbing function $h_1$ as well, and expressing $\vecrho_0'$ in terms of $\vecrho_0$ and $\widetilde{F}$ via Eq.~\eqref{eq:vecrho0_prime}, the integral over $\widetilde{F}$ in Eq.~\eqref{eq:fourier_coeffs_3} can be computed in terms of Laplace coefficients. The remaining integral over $F$ can be then performed using Proposition~\ref{proposition1}, by applying Eq.~\eqref{eq:proposition1} with $\vec{u}_i \in V_1 \cup \widetilde{V}_1 \cup \{\vec\ell, \vec\ell \times \vechat{k}\}$. As a result, any inequality $h_{1}^{\kappa,\kappa'}$ of the disturbing function can be fully expressed in terms of scalar products of vectors belonging to the set $V_1 \cup \{ \vec\ell, \vec\ell' \}$.
We stress that the algorithm presented here allows for the direct computation of any specific inequality, without the need to determine the entire Fourier expansion of the disturbing function. This significantly reduces the number of terms to be stored and processed, which is particularly advantageous when constructing simplified dynamical models, such as those related to mean-motion resonances. 

We give in Appendix~\ref{sec:A1_inequalities} the expression of the inequalities of the direct part of the disturbing function related to the mean-motion resonances 1:1, 2:1, and 3:2, expanded up to degree 2 in eccentricities and inclinations, which has been obtained by means of the computer algebra system TRIP. 

As a final remark, we point out that an obvious alternative way of computing the integral~\eqref{eq:fourier_coeffs} is to express the disturbing function $h_1$ in terms of the complex variables $\E^{\iota F}, \E^{\iota F'}, z, z', \zeta, \zeta'$. Switching from vectors to complex variables is easily achieved considering Eq.~\eqref{eq:complex_to_vecs}, which gives: 
\begin{equation}
\vec{u}_i \cdot \vec{u}_j = \real(u_i \bar{u}_j) .
\end{equation}
This allows us to express all the scalar products appearing in the disturbing function in terms of complex variables. Once this is done, a finite Fourier expansion of $h_1 \E^{-\iota (k \lambda + k' \lambda')}$ with respect to $F$ and $F'$ is obtained through Eq.~\eqref{eq:expi_lambda} for a given maximum degree in eccentricities and inclinations. Any Fourier harmonic can be then readily extracted when employing a computer algebra system. Clearly, this procedure can also be applied as an alternative to the algorithm presented in Sect.~\ref{sect:sec_ham} to compute the secular disturbing function, that is, the coefficient $C_{0,0}$, in terms of complex variables. 

\section{Expressions involving angular momentum and eccentricity vectors}
\label{sec:eh}
We illustrate now how to express the inequalities of the disturbing function we obtained in the previous sections, in terms of the dimensionless angular momentum and eccentricity vectors of each orbit, that is, $\vech$ and $\vece$. We will retrieve in particular \citet{Abdullah2001a,Abdullah2001}'s expansion of the secular Hamiltonian.

The procedure is straightforward and based on inverting Eqs.~\eqref{eq:angular_momentum} and \eqref{eq:ecc_vec}, which allows us to write the vectors $\vecz, \veczeta$ as a function of $\vece, \vec{h}$:
\begin{equation}
\label{eq:z_zeta_to_e_h}
\begin{aligned}
\vecz &= \vece - \frac{\vece \cdot \vechat{k}}{2 \phi \chi^2} (\vec{h} + \phi \vechat{k}), \\
\veczeta &= \frac{ \vechat{k} \times \vec{h} }{2\phi\chi} ,
\end{aligned}
\end{equation}
with $\phi = \sqrt{1-\vece \cdot \vece}$ and $\phi \chi^2 = (\phi+\vec{h} \cdot \vechat{k})/2$.
By means of these identities, any expansion containing the vectors $\vecz,\veczeta,\vecz',\veczeta'$ can be expressed in terms of $\vece,\vech,\vece',\vech'$ and $\vechat{k}$. We recall that $\vechat{k}$ is the vertical axis of the inertial reference frame $\mathcal{I}$ in use, and is supposed to be aligned with the total angular momentum of the planetary system. 


It is important to remark that the ordering of the series terms is altered by Eq.~\eqref{eq:z_zeta_to_e_h}, which mixes together terms of different degree in eccentricities and inclinations. The correct ordering can be restored by noting that all the following quantities are of degree 2 in eccentricities and inclinations\footnote{We recall that $\vec{e} \cdot \vec{h} = 0$ and $\vech \cdot \vech = 1 - \vece \cdot \vece$.}:
\begin{equation}
\begin{aligned}
\vece \cdot \vece , \quad \vece \cdot \vece', \quad \vece \cdot \vech', \quad 1 - \vech \cdot \vech' ,& \\
\vece \cdot \vechat{k} , \quad 1 - \vech \cdot \vechat{k},
\end{aligned}
\end{equation}
while the following ones, appearing in non-secular inequalities, are of degree 1:
\begin{equation}
\vec\ell \cdot \vece , \quad \vec\ell \cdot \vece' , \quad \vec\ell \cdot \vech , \quad \vec\ell \cdot \vech' .
\end{equation}

As an example of the resulting expansions, we first report the secular disturbing function truncated to degree 4 in eccentricities and inclinations.
\begingroup
\allowdisplaybreaks
\begin{align*}
\label{eq:secular_eh}
\left< \frac{a'}{\Delta} \right> =  \stepcounter{equation}\tag{\theequation} \\
            \frac{1}{2}         \lapl{1}{0} \\
 
 -          \frac{1}{4}         \lapl{3}{1} \alpha \left( 1 - \vech \cdot \vech' \right) \\
 +          \frac{1}{4}         \lapl{3}{1} \alpha \left( \vece \cdot \vece \right) \\
 + \left(  - \frac{1}{2} \lapl{3}{1} + \frac{3}{4} \alpha \lapl{3}{0} - \frac{1}{2} \alpha^{2} \lapl{3}{1} \right) \left( \vece \cdot \vece' \right) \\
 +          \frac{1}{4}         \lapl{3}{1} \alpha \left( \vece' \cdot \vece' \right) \\

+ \left(  - \frac{3}{16} \alpha \lapl{5}{1} + \frac{21}{32} \alpha^{2} \lapl{5}{0} - \frac{3}{16} \alpha^{3} \lapl{5}{1} \right) \left( 1 - \vech \cdot \vech' \right)^{2} \\
 +          \frac{9}{32}        \lapl{5}{1} \alpha^{3} \left( \vece \cdot \vece \right)^{2} \\
 + \left(  - \frac{9}{16} \alpha \lapl{5}{1} + \frac{45}{32} \alpha^{2} \lapl{5}{0} - \frac{9}{16} \alpha^{3} \lapl{5}{1} \right) \left( \vece \cdot \vece' \right)^{2} \\
 + \left(  - \frac{9}{16} \alpha \lapl{5}{1} + \frac{15}{32} \alpha^{2} \lapl{5}{0} - \frac{3}{16} \alpha^{3} \lapl{5}{1} \right) \left( \vece \cdot \vech' \right)^{2} \\
 +          \frac{9}{32}        \lapl{5}{1} \alpha \left( \vece' \cdot \vece' \right)^{2} \\
 + \left(  - \frac{3}{16} \alpha \lapl{5}{1} + \frac{15}{32} \alpha^{2} \lapl{5}{0} - \frac{9}{16} \alpha^{3} \lapl{5}{1} \right) \left( \vece' \cdot \vech \right)^{2} \\
 + \left(    \frac{3}{8} \alpha \lapl{5}{1} - \frac{15}{16} \alpha^{2} \lapl{5}{0} \right) \left( 1 - \vech \cdot \vech' \right) \left( \vece \cdot \vece \right) \\
 + \left(  - \frac{3}{4} \lapl{5}{1} + \frac{15}{8} \alpha \lapl{5}{0} - \frac{27}{16} \alpha^{2} \lapl{5}{1} + \frac{15}{8} \alpha^{3} \lapl{5}{0} - \frac{3}{4} \alpha^{4} \lapl{5}{1} \right) \left( 1 - \vech \cdot \vech' \right) \left( \vece \cdot \vece' \right) \\
 + \left(  - \frac{15}{16} \alpha^{2} \lapl{5}{0} + \frac{3}{8} \alpha^{3} \lapl{5}{1} \right) \left( 1 - \vech \cdot \vech' \right) \left( \vece' \cdot \vece' \right) \\
 + \left(  - \frac{15}{16} \alpha^{3} \lapl{5}{0} + \frac{3}{8} \alpha^{4} \lapl{5}{1} \right) \left( \vece \cdot \vece \right) \left( \vece \cdot \vece' \right) \\
 +          \frac{9}{16}        \lapl{5}{0} \alpha^{2} \left( \vece \cdot \vece \right) \left( \vece' \cdot \vece' \right) \\
 + \left(    \frac{3}{8} \lapl{5}{1} - \frac{15}{16} \alpha \lapl{5}{0} \right) \left( \vece \cdot \vece' \right) \left( \vece' \cdot \vece' \right) \\
 + \left(    \frac{3}{4} \lapl{5}{1} - \frac{15}{8} \alpha \lapl{5}{0} + \frac{27}{16} \alpha^{2} \lapl{5}{1} - \frac{15}{8} \alpha^{3} \lapl{5}{0} + \frac{3}{4} \alpha^{4} \lapl{5}{1} \right) \left( \vece \cdot \vech' \right) \left( \vece' \cdot \vech \right) \\

\end{align*}
\endgroup
The expression coincides with Abdullah's expansion in the form given in \cite[Eq.~(26)]{Boue2014}. We note that the vector $\vechat{k}$ does not appear in Eq.~\eqref{eq:secular_eh}. Therefore, its precise definition is not important as long as it is chosen to be close to the total angular momentum vector of the planetary system. Choices such as $\vechat{k} = \vechat{h}$ or $\vechat{k} = \vechat{h}'$, for example, would have led to the same result. 

As an example of non-secular inequality, we consider the Fourier harmonic of the direct part of the disturbing function that is related to the mean-motion resonance~1:1. We compute the quantity $2 \real \left( C_{-1,1}^\mathrm{dir} \E^{\iota (- \lambda + \lambda')} \right)$, where 
\begin{equation}
C_{\kappa,\kappa'}^\mathrm{dir} = \left< \frac{a'}{\Delta} \, \E^{-\iota (\kappa \lambda + \kappa' \lambda')} \right> ,
\end{equation}
and truncate the expansion to degree 2 in eccentricities and inclinations. 
\begingroup
\allowdisplaybreaks
\begin{align*}
\label{eq:MMR_11_eh}
2 \real \left( C_{-1,1}^\mathrm{dir} \E^{\iota (- \lambda + \lambda')} \right) = \stepcounter{equation}\tag{\theequation} \\
                       \lapl{1}{1} \left( \vec\ell \cdot \vec\ell' \right) \\
 + \left(  - \frac{1}{2} \lapl{3}{1} + \frac{1}{2} \alpha \lapl{3}{0} - \frac{1}{2} \alpha^{2} \lapl{3}{1} \right) \left( 1 - \vech \cdot \vech' \right) \left( \vec\ell \cdot \vec\ell' \right) \\
 + \left(    \frac{5}{4} \lapl{3}{1} - 2 \alpha \lapl{3}{0} + \frac{5}{4} \alpha^{2} \lapl{3}{1} \right) \left( \vece \cdot \vece \right) \left( \vec\ell \cdot \vec\ell' \right) \\
 + \left(  - \frac{5}{3} \alpha^{-1} \lapl{3}{1} + \frac{5}{2} \lapl{3}{0} - \frac{13}{6} \alpha \lapl{3}{1} + \frac{5}{2} \alpha^{2} \lapl{3}{0} - \frac{5}{3} \alpha^{3} \lapl{3}{1} \right) \left( \vece \cdot \vece' \right) \left( \vec\ell \cdot \vec\ell' \right) \\
 + \left(    \frac{5}{3} \alpha^{-1} \lapl{3}{1} - \frac{5}{2} \lapl{3}{0} + \frac{5}{3} \alpha \lapl{3}{1} - \frac{5}{2} \alpha^{2} \lapl{3}{0} + \frac{5}{3} \alpha^{3} \lapl{3}{1} \right) \left( \vece \cdot \vec\ell \right) \left( \vece' \cdot \vec\ell' \right) \\
 + \left(  - \frac{5}{3} \alpha^{-1} \lapl{3}{1} + \frac{5}{2} \lapl{3}{0} - \frac{5}{3} \alpha \lapl{3}{1} + \frac{5}{2} \alpha^{2} \lapl{3}{0} - \frac{5}{3} \alpha^{3} \lapl{3}{1} \right) \left( \vece \cdot \vec\ell' \right) \left( \vece' \cdot \vec\ell \right) \\
 + \left(    \frac{5}{4} \lapl{3}{1} - 2 \alpha \lapl{3}{0} + \frac{5}{4} \alpha^{2} \lapl{3}{1} \right) \left( \vece' \cdot \vece' \right) \left( \vec\ell \cdot \vec\ell' \right) \\
 + \left(  - \frac{1}{2} \lapl{3}{1} +  \alpha \lapl{3}{0} - \frac{1}{2} \alpha^{2} \lapl{3}{1} \right) \left( \vech \cdot \vec\ell \right) \left( \vech' \cdot \vec\ell' \right) \\
 + \left(    \frac{1}{2} \lapl{3}{1} -  \alpha \lapl{3}{0} + \frac{1}{2} \alpha^{2} \lapl{3}{1} \right) \left( \vech \cdot \vec\ell' \right) \left( \vech' \cdot \vec\ell \right) \\
 
\end{align*}
\endgroup
Here again the vector $\vechat{k}$ does not appear explicitly in the final expression. However, the vectors $\vec\ell, \vec\ell'$ do implicitly depend on $\vechat{k}$, as they are perpendicular to it by definition, and therefore lie in the reference plane of the frame $\mathcal{I}$ [see Eq.~\eqref{eq:vec_ell}]. This feature could be undesirable if one aims to describe the motion of a nearly-circular and quasi-coplanar planetary system that is not isolated, but interacts, for instance, with an inclined companion. In such a case, the total angular momentum of the planets is not constant, and as a consequence the reference frame $\mathcal{I}$ is not inertial. It is then possible to express the vectors $\vec\ell, \vec\ell'$ via an arbitrary inertial reference frame $\widetilde{\mathcal{I}}$ with a right-handed basis of unit vectors $\vechat{l}, \vechat{m}, \vechat{n}$. Using Eq.~\eqref{eq:vec_ell} and Lemma~\ref{lemma1}, we can write:
\begin{equation}
\label{eq:l_primed}
\begin{aligned}
\vec\ell &= \rot{\vec{k}}(\lambda) \vechat{\imath} = 
[\rot{\vec{\zeta}}(\inc)]^{-1} \rot{\vec{\zeta}}(\inc) \rot{\vec{k}}(\lambda) \vechat{\imath} =
R_{-\vechat{\zeta}}(\inc) \begin{bmatrix} \cos{M} \\ \sin{M} \\ 0 \end{bmatrix}_\mathcal{O} = 
R_{-\vechat{\zeta}}(\inc) R_{\vec{\tilde{\zeta}} / \Vert \vec{\tilde{\zeta}} \Vert}(\tilde{\inc}) \, \vec{\tilde{\ell}} , \\
&\mathrm{with} \quad \vec{\tilde{\ell}} = \begin{bmatrix} \cos{\tilde{\lambda}} \\ \sin{\tilde{\lambda}} \\ 0 \end{bmatrix}_{\widetilde{\mathcal{I}}}, 
\quad \vec{\tilde{\zeta}} = \frac{ \vechat{n} \times \vec{h} }{2\phi\widetilde{\chi}}, \quad \phi {\widetilde{\chi}}^2 = \frac{\phi+\vec{h} \cdot \vechat{n}}{2} . \\
\end{aligned}
\end{equation}
In these expressions, $\tilde\lambda$ and $\tilde\inc$ are the mean longitude and inclination of the unprimed orbit, respectively, with respect to the frame $\widetilde{\mathcal{I}}$, while $\vec{\tilde{\zeta}}$ points towards the corresponding ascending node. Analogous expressions hold for the primed orbit. Equation~\eqref{eq:l_primed} allows us to obtain the dependencies $\vec\ell = \vec\ell(\vec{\tilde\ell}, \vech,\vece; \vechat{k},\vechat{n})$ and $\vec\ell' = \vec\ell'(\vec{\tilde\ell}\,', \vech',\vece'; \vechat{k},\vechat{n})$. The most compact expressions are obtained with the choice $\vechat{k} = \vechat{h}$, which gives: 
\begin{equation}
\veczeta = \vec{0} , \qquad 
\veczeta' = \frac{ \vech \times \vech' }{2 \phi \phi' \chi'},
\end{equation}
where $\phi \phi' {\chi'}^2 = (\phi \phi' + \vech \cdot \vech')/2$. 

As a final remark, we observe that from Eqs.~\eqref{eq:rho_tilde_2} and \eqref{eq:z_zeta_to_e_h} it follows that:
\begin{equation}
\vecrhotilde = \vecrho_0 - \frac{\vecrho_0 \cdot \vech}{2 \phi^2 \chi^2} (\vech + \phi \vechat{k}) .
\end{equation}
Therefore, the expansion presented in Sec.~\ref{sect:expansion_direct_part} and subsequent ones can be carried out in terms of the angular momentum and eccentricity vectors from the beginning, without passing through the vectors $\vecz$ and $\veczeta$. 

\section{Discussion}
\label{sect:discussion}
In this work, we set up a vector-based formalism to develop the Hamiltonian of nearly-circular and quasi-coplanar planetary systems. The expansion is naturally expressed in terms of scalar products of vectors that lie in the system's invariant plane and encode the slow degrees of freedom of the dynamics ($\vecz, \veczeta$) as well as the fast ones ($\vecrho_0$ or $\vec\ell$). We demonstrate that the secular part of the disturbing function and all other non-secular inequalities (i.e., Fourier harmonics) can be developed in terms of such vectors. We also illustrate how to write the corresponding expansions in terms of the angular momentum and eccentricity vectors of the orbits. 

The usefulness of this kind of vector-based expansions lies in their obvious invariance under rotation. They can be easily evaluated in any reference system. As an example, this is of interest when a quasi-coplanar planetary system, interacting with an inclined companion, is entirely tilted to a significant degree with respect to some reference plane \cite{Boue2014}. In this regard, expansions involving the angular momentum and eccentricity vectors are the most useful ones. 

\backmatter





\bmhead{Acknowledgements}

We acknowledge the fruitful discussions with A.~Alnajjarine, J.~Laskar, and P.~Robutel and the kind hospitality of the Astronomy and Dynamical Systems team at the Laboratoire Temps Espace -- Paris Observatory during the preparation of part of this work. We also acknowledge the chance of employing the computer algebra system TRIP, which provides manipulation of symbolic vectors as required in the implementation of this work. 











\begin{appendices}

\section{Proof of Lemma \texorpdfstring{\ref{lemma1}}{1}}
\label{sec:A0}
We present here a demonstration of Eq.~\eqref{eq:lemma}, that is, $$\rot{\vec{h}}(\theta) \rot{\vec{\zeta}}(\inc) \vec{v} = \rot{\vec{\zeta}}(\inc) \rot{\vec{k}}(\theta) \vec{v} ,$$ for any $\vec{v} \in \mathbb{R}^3$ and $\theta \in [0,2\pi)$. The proof exploits the connection between composition of two rotations in $\mathbb{R}^3$ and multiplication of quaternions \cite{Cayley1845,Hamilton1853} (see \cite{Rodrigues1840,Gibbs1901} for an equivalent way of composing two spatial rotations). 

Consider the unit quaternions $q_1, q_2, q_3$ representing the rotations $\rot{\vec{h}}(\theta), \rot{\vec{\zeta}}(\inc), \rot{\vec{k}}(\theta)$, respectively. In the scalar-vector notation, one has:
\begin{equation}
\begin{aligned}
\rot{\vec{h}}(\theta) \quad \longleftrightarrow \quad q_1 &= \cos(\theta/2) + \sin(\theta/2) \vechat{h} \\
\rot{\vec{\zeta}}(\inc) \quad \longleftrightarrow \quad q_2 &= \chi + \veczeta \\
\rot{\vec{k}}(\theta) \quad \longleftrightarrow \quad q_3 &= \cos(\theta/2) + \sin(\theta/2) \vechat{k}
\end{aligned} ,
\end{equation}
where we used $\cos(I/2) = \sqrt{1-\veczeta\cdot\veczeta} = \chi$. 
The composite rotations $\rot{\vec{h}}(\theta) \rot{\vec{\zeta}}(\inc)$ and $\rot{\vec{\zeta}}(\inc) \rot{\vec{k}}(\theta)$ then correspond to the quaternion products $q_1q_2$ and $q_2q_3$, respectively:
\begin{equation}
\begin{aligned}
\rot{\vec{h}}(\theta) \rot{\vec{\zeta}}(\inc) \quad \longleftrightarrow \quad 
q_1 q_2 &= \cos(\theta/2) (\chi + \veczeta) + \sin(\theta/2) (\chi \vechat{h} + \vechat{h} \times \veczeta) \\
\rot{\vec{\zeta}}(\inc) \rot{\vec{k}}(\theta) \quad \longleftrightarrow \quad 
q_2 q_3 &= \cos(\theta/2) (\chi + \veczeta) + \sin(\theta/2) (\chi \vechat{k} + \veczeta \times \vechat{k})
\end{aligned} .
\end{equation}
where we used $\vechat{h} \cdot \veczeta = 0$ and $\vechat{k} \cdot \veczeta = 0$. By employing Eq.~\eqref{eq:angular_momentum}, one finds:
\begin{equation}
\begin{aligned}
\chi \vechat{h} + \vechat{h} \times \veczeta &= (1-2 \veczeta\cdot\veczeta) \chi \vechat{k} + 2 \chi^2 \veczeta \times \vechat{k} - (1-2 \veczeta\cdot\veczeta) \veczeta \times \vechat{k} + 2 \chi (\veczeta \cdot \veczeta) \vechat{k} \\
&= \chi \vechat{k} + \veczeta \times \vechat{k} .
\end{aligned}
\end{equation}
Therefore, one obtains $q_1q_2 = q_2q_3$ and $\rot{\vec{h}}(\theta) \rot{\vec{\zeta}}(\inc) = \rot{\vec{\zeta}}(\inc) \rot{\vec{k}}(\theta)$. 

\section{Examples of expansion}
\label{sec:A1}
This appendix presents examples that illustrate the expansions considered in this work. It is important to note that the listing of the series highlights the different scalar products of vectors involved in the development, rather than providing the most compact form of the expansion. 

The vectors $\vecrho_0, \vecz, \veczeta$ are defined in Eq.~\eqref{eq:vectors_complex_vars}.

\subsection{Direct part of the disturbing function to degree 2}
\label{sec:A1_direct_part_2}
We report here the expansion of the direct part of the disturbing function [Eq.~\eqref{eq:direct_part}] up to degree 2 in the eccentricities and inclinations of the orbits. 

We recall that $\Vert\vec\nu\Vert = \sqrt{1 + \alpha^2 -2\alpha \vec\rho_0 \cdot \vecrho_0'}$ and $\vec\rho_0 \cdot \vecrho_0' = \cos(F'-F)$.

\begin{align*}
\frac{a'}{\Delta} = \stepcounter{equation}\tag{\theequation} \\
                       \A{1} \\
 +                     \A{3} \alpha^2 \left(\vecrho_0 \cdot \vecz\right) \\
 -                     \A{3} \alpha \left(\vecrho_0 \cdot \vecz'\right) \\
 -                     \A{3} \alpha \left(\vecrho_0' \cdot \vecz\right) \\
 +                     \A{3} \left(\vecrho_0' \cdot \vecz'\right) \\
 + \left( - \frac{1}{2} \alpha^2 \A{3} + \frac{3}{2} \alpha^4 \A{5}\right) \left(\vecrho_0 \cdot \vecz\right)^2 \\
 +          \frac{3}{2}         \A{5} \alpha^2 \left(\vecrho_0 \cdot \vecz'\right)^2 \\
 +          \frac{3}{2}         \A{5} \alpha^2 \left(\vecrho_0' \cdot \vecz\right)^2 \\
 + \left(   \frac{3}{2} \A{5} - \frac{1}{2} \A{3}\right) \left(\vecrho_0' \cdot \vecz'\right)^2 \\
 +                     \A{3} \alpha \left(\vecz \cdot \vecz'\right) \\
 -          \frac{1}{2}         \A{3} \alpha \left(\vecrho_0 \cdot \vecrho_0'\right) \left(\vecz \cdot \vecz\right) \\
 -          \frac{1}{2}         \A{3} \alpha \left(\vecrho_0 \cdot \vecrho_0'\right) \left(\vecz' \cdot \vecz'\right) \\
 -          2           \A{3} \alpha \left(\vecrho_0 \cdot \vecrho_0'\right) \left(\veczeta \cdot \veczeta\right) \\
 +          4           \A{3} \alpha \left(\vecrho_0 \cdot \vecrho_0'\right) \left(\veczeta \cdot \veczeta'\right) \\
 -          2           \A{3} \alpha \left(\vecrho_0 \cdot \vecrho_0'\right) \left(\veczeta' \cdot \veczeta'\right) \\
 -          3           \A{5} \alpha^3 \left(\vecrho_0 \cdot \vecz\right) \left(\vecrho_0 \cdot \vecz'\right) \\
 + \left(   \frac{1}{2} \alpha \A{3} - 3 \alpha^3 \A{5}\right) \left(\vecrho_0 \cdot \vecz\right) \left(\vecrho_0' \cdot \vecz\right) \\
 +          3           \A{5} \alpha^2 \left(\vecrho_0 \cdot \vecz\right) \left(\vecrho_0' \cdot \vecz'\right) \\
 +          3           \A{5} \alpha^2 \left(\vecrho_0 \cdot \vecz'\right) \left(\vecrho_0' \cdot \vecz\right) \\
 + \left( - 3 \alpha \A{5} + \frac{1}{2} \alpha \A{3}\right) \left(\vecrho_0 \cdot \vecz'\right) \left(\vecrho_0' \cdot \vecz'\right) \\
 +          2           \A{3} \alpha \left(\vecrho_0 \cdot \veczeta\right) \left(\vecrho_0' \cdot \veczeta\right) \\
 -          4           \A{3} \alpha \left(\vecrho_0 \cdot \veczeta'\right) \left(\vecrho_0' \cdot \veczeta\right) \\
 +          2           \A{3} \alpha \left(\vecrho_0 \cdot \veczeta'\right) \left(\vecrho_0' \cdot \veczeta'\right) \\
 -          3           \A{5} \alpha \left(\vecrho_0' \cdot \vecz\right) \left(\vecrho_0' \cdot \vecz'\right) \\

\end{align*}

\subsection{Secular disturbing function to degree 4}
\label{sec:A1_secular_ham_4}
We report here the expansion of the secular disturbing function up to degree 4 in the eccentricities and inclinations of the orbits [see Eqs.~\eqref{eq:sec_ham} and \eqref{eq:sec_direct_part}].

\begin{align*}
\left< \frac{a'}{\Delta} \right> = \stepcounter{equation}\tag{\theequation} \\

+             \frac{1}{8}         \lapl{3}{1} \alpha \left( \vecz \cdot \vecz \right) \\
 + \left(  - \frac{1}{2} \lapl{3}{1} + \frac{3}{4} \alpha \lapl{3}{0} - \frac{1}{2} \alpha^2 \lapl{3}{1} \right) \left( \vecz \cdot \vecz' \right) \\
 +          \frac{1}{8}         \lapl{3}{1} \alpha \left( \vecz' \cdot \vecz' \right) \\
 -          \frac{1}{2}         \lapl{3}{1} \alpha \left( \veczeta \cdot \veczeta \right) \\
 +                     \lapl{3}{1} \alpha \left( \veczeta \cdot \veczeta' \right) \\
 -          \frac{1}{2}         \lapl{3}{1} \alpha \left( \veczeta' \cdot \veczeta' \right) \\

+ \left(    \frac{3}{64} \alpha \lapl{5}{1} - \frac{15}{128} \alpha^2 \lapl{5}{0} + \frac{9}{64} \alpha^3 \lapl{5}{1} \right) \left( \vecz \cdot \vecz \right)^2 \\
 + \left(  - \frac{9}{16} \alpha \lapl{5}{1} + \frac{45}{32} \alpha^2 \lapl{5}{0} - \frac{9}{16} \alpha^3 \lapl{5}{1} \right) \left( \vecz \cdot \vecz' \right)^2 \\
 + \left(    \frac{9}{4} \alpha \lapl{5}{1} - \frac{15}{8} \alpha^2 \lapl{5}{0} + \frac{3}{4} \alpha^3 \lapl{5}{1} \right) \left( \vecz \cdot \veczeta \right)^2 \\
 + \left(    \frac{9}{4} \alpha \lapl{5}{1} - \frac{15}{8} \alpha^2 \lapl{5}{0} + \frac{3}{4} \alpha^3 \lapl{5}{1} \right) \left( \vecz \cdot \veczeta' \right)^2 \\
 + \left(    \frac{9}{64} \alpha \lapl{5}{1} - \frac{15}{128} \alpha^2 \lapl{5}{0} + \frac{3}{64} \alpha^3 \lapl{5}{1} \right) \left( \vecz' \cdot \vecz' \right)^2 \\
 + \left(    \frac{3}{4} \alpha \lapl{5}{1} - \frac{15}{8} \alpha^2 \lapl{5}{0} + \frac{9}{4} \alpha^3 \lapl{5}{1} \right) \left( \vecz' \cdot \veczeta \right)^2 \\
 + \left(    \frac{3}{4} \alpha \lapl{5}{1} - \frac{15}{8} \alpha^2 \lapl{5}{0} + \frac{9}{4} \alpha^3 \lapl{5}{1} \right) \left( \vecz' \cdot \veczeta' \right)^2 \\
 + \left(  - \frac{3}{4} \alpha \lapl{5}{1} + \frac{21}{8} \alpha^2 \lapl{5}{0} - \frac{3}{4} \alpha^3 \lapl{5}{1} \right) \left( \veczeta \cdot \veczeta \right)^2 \\
 + \left(  - 3 \alpha \lapl{5}{1} + \frac{21}{2} \alpha^2 \lapl{5}{0} - 3 \alpha^3 \lapl{5}{1} \right) \left( \veczeta \cdot \veczeta' \right)^2 \\
 + \left(  - \frac{3}{4} \alpha \lapl{5}{1} + \frac{21}{8} \alpha^2 \lapl{5}{0} - \frac{3}{4} \alpha^3 \lapl{5}{1} \right) \left( \veczeta' \cdot \veczeta' \right)^2 \\
 + \left(  - \frac{3}{8} \lapl{5}{1} + \frac{15}{16} \alpha \lapl{5}{0} - \frac{27}{32} \alpha^2 \lapl{5}{1} \right) \left( \vecz \cdot \vecz \right) \left( \vecz \cdot \vecz' \right) \\
 + \left(    \frac{9}{32} \alpha \lapl{5}{1} - \frac{27}{64} \alpha^2 \lapl{5}{0} + \frac{9}{32} \alpha^3 \lapl{5}{1} \right) \left( \vecz \cdot \vecz \right) \left( \vecz' \cdot \vecz' \right) \\
 + \left(  - \frac{9}{8} \alpha \lapl{5}{1} - \frac{3}{16} \alpha^2 \lapl{5}{0} - \frac{3}{8} \alpha^3 \lapl{5}{1} \right) \left( \vecz \cdot \vecz \right) \left( \veczeta \cdot \veczeta \right) \\
 + \left(    \frac{9}{4} \alpha \lapl{5}{1} + \frac{3}{8} \alpha^2 \lapl{5}{0} + \frac{3}{4} \alpha^3 \lapl{5}{1} \right) \left( \vecz \cdot \vecz \right) \left( \veczeta \cdot \veczeta' \right) \\
 + \left(  - \frac{9}{8} \alpha \lapl{5}{1} - \frac{3}{16} \alpha^2 \lapl{5}{0} - \frac{3}{8} \alpha^3 \lapl{5}{1} \right) \left( \vecz \cdot \vecz \right) \left( \veczeta' \cdot \veczeta' \right) \\
 + \left(  - \frac{27}{32} \alpha^2 \lapl{5}{1} + \frac{15}{16} \alpha^3 \lapl{5}{0} - \frac{3}{8} \alpha^4 \lapl{5}{1} \right) \left( \vecz \cdot \vecz' \right) \left( \vecz' \cdot \vecz' \right) \\
 + \left(  - \frac{3}{2} \lapl{5}{1} + \frac{15}{4} \alpha \lapl{5}{0} - \frac{9}{8} \alpha^2 \lapl{5}{1} + \frac{15}{4} \alpha^3 \lapl{5}{0} - \frac{3}{2} \alpha^4 \lapl{5}{1} \right) \left( \vecz \cdot \vecz' \right) \left( \veczeta \cdot \veczeta \right) \\
 + \left(    3 \lapl{5}{1} - \frac{15}{2} \alpha \lapl{5}{0} + \frac{9}{4} \alpha^2 \lapl{5}{1} - \frac{15}{2} \alpha^3 \lapl{5}{0} + 3 \alpha^4 \lapl{5}{1} \right) \left( \vecz \cdot \vecz' \right) \left( \veczeta \cdot \veczeta' \right) \\
 + \left(  - \frac{3}{2} \lapl{5}{1} + \frac{15}{4} \alpha \lapl{5}{0} - \frac{9}{8} \alpha^2 \lapl{5}{1} + \frac{15}{4} \alpha^3 \lapl{5}{0} - \frac{3}{2} \alpha^4 \lapl{5}{1} \right) \left( \vecz \cdot \vecz' \right) \left( \veczeta' \cdot \veczeta' \right) \\
 + \left(  - \frac{9}{2} \alpha \lapl{5}{1} + \frac{15}{4} \alpha^2 \lapl{5}{0} - \frac{3}{2} \alpha^3 \lapl{5}{1} \right) \left( \vecz \cdot \veczeta \right) \left( \vecz \cdot \veczeta' \right) \\
 -          \frac{9}{4}         \lapl{5}{1} \alpha^2 \left( \vecz \cdot \veczeta \right) \left( \vecz' \cdot \veczeta \right) \\
 + \left(  - 3 \lapl{5}{1} + \frac{15}{2} \alpha \lapl{5}{0} - \frac{27}{4} \alpha^2 \lapl{5}{1} + \frac{15}{2} \alpha^3 \lapl{5}{0} - 3 \alpha^4 \lapl{5}{1} \right) \left( \vecz \cdot \veczeta \right) \left( \vecz' \cdot \veczeta' \right) \\
 + \left(    3 \lapl{5}{1} - \frac{15}{2} \alpha \lapl{5}{0} + \frac{45}{4} \alpha^2 \lapl{5}{1} - \frac{15}{2} \alpha^3 \lapl{5}{0} + 3 \alpha^4 \lapl{5}{1} \right) \left( \vecz \cdot \veczeta' \right) \left( \vecz' \cdot \veczeta \right) \\
 -          \frac{9}{4}         \lapl{5}{1} \alpha^2 \left( \vecz \cdot \veczeta' \right) \left( \vecz' \cdot \veczeta' \right) \\
 + \left(  - \frac{3}{8} \alpha \lapl{5}{1} - \frac{3}{16} \alpha^2 \lapl{5}{0} - \frac{9}{8} \alpha^3 \lapl{5}{1} \right) \left( \vecz' \cdot \vecz' \right) \left( \veczeta \cdot \veczeta \right) \\
 + \left(    \frac{3}{4} \alpha \lapl{5}{1} + \frac{3}{8} \alpha^2 \lapl{5}{0} + \frac{9}{4} \alpha^3 \lapl{5}{1} \right) \left( \vecz' \cdot \vecz' \right) \left( \veczeta \cdot \veczeta' \right) \\
 + \left(  - \frac{3}{8} \alpha \lapl{5}{1} - \frac{3}{16} \alpha^2 \lapl{5}{0} - \frac{9}{8} \alpha^3 \lapl{5}{1} \right) \left( \vecz' \cdot \vecz' \right) \left( \veczeta' \cdot \veczeta' \right) \\
 + \left(  - \frac{3}{2} \alpha \lapl{5}{1} + \frac{15}{4} \alpha^2 \lapl{5}{0} - \frac{9}{2} \alpha^3 \lapl{5}{1} \right) \left( \vecz' \cdot \veczeta \right) \left( \vecz' \cdot \veczeta' \right) \\
 + \left(    \frac{3}{2} \alpha \lapl{5}{1} - \frac{15}{2} \alpha^2 \lapl{5}{0} + \frac{3}{2} \alpha^3 \lapl{5}{1} \right) \left( \veczeta \cdot \veczeta \right) \left( \veczeta \cdot \veczeta' \right) \\
 + \left(    \frac{3}{2} \alpha \lapl{5}{1} - \frac{3}{4} \alpha^2 \lapl{5}{0} + \frac{3}{2} \alpha^3 \lapl{5}{1} \right) \left( \veczeta \cdot \veczeta \right) \left( \veczeta' \cdot \veczeta' \right) \\
 + \left(    \frac{3}{2} \alpha \lapl{5}{1} - \frac{15}{2} \alpha^2 \lapl{5}{0} + \frac{3}{2} \alpha^3 \lapl{5}{1} \right) \left( \veczeta \cdot \veczeta' \right) \left( \veczeta' \cdot \veczeta' \right) \\

\end{align*}

\subsection{Inequalities of the direct part of the disturbing function}
\label{sec:A1_inequalities}
We consider here the inequalities of the direct part of the disturbing function that are involved in the mean-motion resonances 1:1, 2:1, and 3:2. For each resonance $p\!:\!q$, we report the quantity $2 \real \left( C_{\kappa,\kappa'}^\mathrm{dir} \E^{\iota (\kappa \lambda + \kappa' \lambda')} \right)$, with $C_{\kappa,\kappa'}^\mathrm{dir} = \left< a'/\Delta \, \E^{-\iota (\kappa \lambda + \kappa' \lambda')} \right>$ and $\kappa=-q, \kappa'=p$. The inequalities are expanded up to degree 2 in eccentricities and inclinations of the orbits.  

\begin{align*}
2 \real \left( C_{-1,1}^\mathrm{dir} \E^{\iota (- \lambda + \lambda')} \right) = \stepcounter{equation}\tag{\theequation} \\
                       \lapl{1}{1} \left( \vec\ell \cdot \vec\ell' \right) \\
 + \left(     \lapl{3}{1} - \frac{7}{4} \alpha \lapl{3}{0} +  \alpha^{2} \lapl{3}{1} \right) \left( \vec\ell \cdot \vec\ell' \right) \left( \vecz \cdot \vecz \right) \\
 + \left(  - \frac{5}{3} \alpha^{-1} \lapl{3}{1} + \frac{5}{2} \lapl{3}{0} - \frac{13}{6} \alpha \lapl{3}{1} + \frac{5}{2} \alpha^{2} \lapl{3}{0} - \frac{5}{3} \alpha^{3} \lapl{3}{1} \right) \left( \vec\ell \cdot \vec\ell' \right) \left( \vecz \cdot \vecz' \right) \\
 + \left(     \lapl{3}{1} - \frac{7}{4} \alpha \lapl{3}{0} +  \alpha^{2} \lapl{3}{1} \right) \left( \vec\ell \cdot \vec\ell' \right) \left( \vecz' \cdot \vecz' \right) \\
 + \left(  -  \lapl{3}{1} +  \alpha \lapl{3}{0} -  \alpha^{2} \lapl{3}{1} \right) \left( \vec\ell \cdot \vec\ell' \right) \left( \veczeta \cdot \veczeta \right) \\
 + \left(    2 \lapl{3}{1} - 2 \alpha \lapl{3}{0} + 2 \alpha^{2} \lapl{3}{1} \right) \left( \vec\ell \cdot \vec\ell' \right) \left( \veczeta \cdot \veczeta' \right) \\
 + \left(  -  \lapl{3}{1} +  \alpha \lapl{3}{0} -  \alpha^{2} \lapl{3}{1} \right) \left( \vec\ell \cdot \vec\ell' \right) \left( \veczeta' \cdot \veczeta' \right) \\
 + \left(    \frac{5}{3} \alpha^{-1} \lapl{3}{1} - \frac{5}{2} \lapl{3}{0} + \frac{5}{3} \alpha \lapl{3}{1} - \frac{5}{2} \alpha^{2} \lapl{3}{0} + \frac{5}{3} \alpha^{3} \lapl{3}{1} \right) \left( \vec\ell \cdot \vecz \right) \left( \vec\ell' \cdot \vecz' \right) \\
 + \left(  - \frac{5}{3} \alpha^{-1} \lapl{3}{1} + \frac{5}{2} \lapl{3}{0} - \frac{5}{3} \alpha \lapl{3}{1} + \frac{5}{2} \alpha^{2} \lapl{3}{0} - \frac{5}{3} \alpha^{3} \lapl{3}{1} \right) \left( \vec\ell \cdot \vecz' \right) \left( \vec\ell' \cdot \vecz \right) \\
 + \left(  - 2 \lapl{3}{1} + 4 \alpha \lapl{3}{0} - 2 \alpha^{2} \lapl{3}{1} \right) \left( \vec\ell \cdot \veczeta \right) \left( \vec\ell' \cdot \veczeta' \right) \\
 + \left(    2 \lapl{3}{1} - 4 \alpha \lapl{3}{0} + 2 \alpha^{2} \lapl{3}{1} \right) \left( \vec\ell \cdot \veczeta' \right) \left( \vec\ell' \cdot \veczeta \right) \\

\end{align*}

\begin{align*}
2 \real \left( C_{-1,2}^\mathrm{dir} \E^{\iota (- \lambda + 2\lambda')} \right) = \stepcounter{equation}\tag{\theequation} \\
\left(  - \frac{2}{3} \alpha^{-1} \lapl{3}{1} +  \lapl{3}{0} - \frac{7}{6} \alpha \lapl{3}{1} + \frac{5}{2} \alpha^{2} \lapl{3}{0} - \frac{5}{3} \alpha^{3} \lapl{3}{1} \right) \left( \vec\ell \cdot \vecz \right) \\
 + \left(     \lapl{3}{1} - \frac{5}{2} \alpha \lapl{3}{0} + \frac{3}{2} \alpha^{2} \lapl{3}{1} \right) \left( \vec\ell \cdot \vecz' \right) \\
 + \left(    \frac{4}{3} \alpha^{-1} \lapl{3}{1} - 2 \lapl{3}{0} + \frac{7}{3} \alpha \lapl{3}{1} - 5 \alpha^{2} \lapl{3}{0} + \frac{10}{3} \alpha^{3} \lapl{3}{1} \right) \left( \vec\ell \cdot \vec\ell' \right) \left( \vec\ell' \cdot \vecz \right) \\
 + \left(  - 2 \lapl{3}{1} + 5 \alpha \lapl{3}{0} - 3 \alpha^{2} \lapl{3}{1} \right) \left( \vec\ell \cdot \vec\ell' \right) \left( \vec\ell' \cdot \vecz' \right) \\

\end{align*}

\begin{adjustwidth}{-2cm}{0cm}
\begin{align*}
\stepcounter{equation}\tag{\theequation} \\
2 \real \left( C_{-2,3}^\mathrm{dir} \E^{\iota (-2 \lambda + 3\lambda')} \right) = \\
\left(  - \frac{4}{5} \alpha^{-2} \lapl{3}{1} + \frac{6}{5} \alpha^{-1} \lapl{3}{0} - \frac{31}{30} \lapl{3}{1} + \frac{11}{10} \alpha \lapl{3}{0} - \frac{23}{15} \alpha^{2} \lapl{3}{1} + \frac{16}{5} \alpha^{3} \lapl{3}{0} - \frac{32}{15} \alpha^{4} \lapl{3}{1} \right) \left( \vec\ell' \cdot \vecz \right) \\
 + \left(     \alpha^{-1} \lapl{3}{1} - \frac{3}{2} \lapl{3}{0} + \frac{3}{2} \alpha \lapl{3}{1} - 3 \alpha^{2} \lapl{3}{0} + 2 \alpha^{3} \lapl{3}{1} \right) \left( \vec\ell' \cdot \vecz' \right) \\
 + \left(  - \frac{8}{5} \alpha^{-2} \lapl{3}{1} + \frac{12}{5} \alpha^{-1} \lapl{3}{0} - \frac{31}{15} \lapl{3}{1} + \frac{11}{5} \alpha \lapl{3}{0} - \frac{46}{15} \alpha^{2} \lapl{3}{1} + \frac{32}{5} \alpha^{3} \lapl{3}{0} - \frac{64}{15} \alpha^{4} \lapl{3}{1} \right) \left( \vec\ell \cdot \vec\ell' \right) \left( \vec\ell \cdot \vecz \right) \\
 + \left(    2 \alpha^{-1} \lapl{3}{1} - 3 \lapl{3}{0} + 3 \alpha \lapl{3}{1} - 6 \alpha^{2} \lapl{3}{0} + 4 \alpha^{3} \lapl{3}{1} \right) \left( \vec\ell \cdot \vec\ell' \right) \left( \vec\ell \cdot \vecz' \right) \\
 + \left(    \frac{16}{5} \alpha^{-2} \lapl{3}{1} - \frac{24}{5} \alpha^{-1} \lapl{3}{0} + \frac{62}{15} \lapl{3}{1} - \frac{22}{5} \alpha \lapl{3}{0} + \frac{92}{15} \alpha^{2} \lapl{3}{1} - \frac{64}{5} \alpha^{3} \lapl{3}{0} + \frac{128}{15} \alpha^{4} \lapl{3}{1} \right) \left( \vec\ell \cdot \vec\ell' \right)^{2} \left( \vec\ell' \cdot \vecz \right) \\
 + \left(  - 4 \alpha^{-1} \lapl{3}{1} + 6 \lapl{3}{0} - 6 \alpha \lapl{3}{1} + 12 \alpha^{2} \lapl{3}{0} - 8 \alpha^{3} \lapl{3}{1} \right) \left( \vec\ell \cdot \vec\ell' \right)^{2} \left( \vec\ell' \cdot \vecz' \right) \\

\end{align*}
\end{adjustwidth}




\end{appendices}

\bibliography{sn-bibliography}

\end{document}